\newcommand{\version}{October 11, 2010}
         \newcommand{\id}{\mathds{1}}}
         \let\mathds=\mathbb
         \newcommand{\id}{\mbox{1 \kern-.59em {\rm l}}}}
\let\one=\id
\let\startappendix=\appendix
\newcommand{\nocontentsline}[3]{}
\newcommand{\tocless}[3]{\bgroup\let\addcontentsline=\nocontentsline#1{#2}#3\egroup}
\newcommand{\Appendix}[1]{
  \refstepcounter{section}
  \section*{Appendix \thesection:\hspace*{1.5ex} #1}
  \addcontentsline{toc}{section}{Appendix \thesection}
}
\newcommand{\SubAppendix}[2]{\tocless\subsection{#1}{\label{#2}}}
\newtheorem{theorem}{Theorem}
\newtheorem{lemma}[theorem]{Lemma}
\newcommand{\qed}{\nobreak \ifvmode \relax \else
      \ifdim\lastskip<1.5em \hskip-\lastskip
      \hskip1.5em plus0em minus0.5em \fi \nobreak
      \vrule height0.75em width0.5em depth0.25em\fi}
\newenvironment{proof}[1][Proof. \hspace*{1ex}]{\begin{trivlist}
\item[\hskip \labelsep {\bfseries #1}]}{\qed\end{trivlist}}
\newcommand{\be}{\begin{equation}}
\newcommand{\ee}{\end{equation}}
\newcommand{\eq}[1]{(\ref{#1})}
\def\nn{\nonumber}
\def\bea{\begin{eqnarray}}
\def\eea{\end{eqnarray}}
\def\beqa{\begin{eqnarray}} 
\def\eeqa{\end{eqnarray}} 
\def\beq{\begin{equation}} 
\def\eeq{\end{equation}}
\def\Tr{{\rm Tr}}
\def\a{\alpha}          
\def\b{\beta}           
\def\d{\delta}    
\def\e{\epsilon}                
\def\g{\gamma} 
\def\k{\kappa}
\def\m{\mu}     \def\n{\nu}
\def\r{\rho}
\def\s{\sigma}  
\def\t{\tau}
\def\th{\theta}
\def\cJ{{\cal J}}  
\def\cM{{\cal M}} \def\cN{{\cal N}} \def\cO{{\cal O}}
\def\cP{{\cal P}}
\newcommand{\R}{\mathds{R}}
\newcommand{\N}{\mathds{N}}
\def\bit{\begin{itemize}}
\def\eit{\end{itemize}}
\def\({\left(}
\def\){\right)}
\def\diag{\mbox{diag}}
\def\d{\delta}
\def\pa{\partial} \def\del{\partial}
\newcommand{\tr}{\mbox{tr}}
\def\bcomment#1{}
\newcommand{\nc}{non-com\-mu\-ta\-tive}
\newcommand{\eqnref}[1]{Eqn.~(\ref{#1})}		
\newcommand{\secref}[1]{Section~\ref{#1}}		
\newcommand{\appref}[1]{Appendix~\ref{#1}}		
\newcommand{\inv}[1]{\frac{1}{#1}}				
\newcommand{\nabg}{\nabla'}
\newcommand{\nabG}{\nabla}
\newcommand{\lap}{\square_g}
\newcommand{\Lap}{\square_G}
\newcommand{\pb}[2]{\{#1,#2\}}						
\newcommand{\co}[2]{[#1,#2]}						
\newcommand{\aco}[2]{[#1,#2]_+}						
\newcommand{\intg}{\int\!d^4x\sqrt{g}\,}				
\newcommand{\intG}{\int\!d^4x\sqrt{G}\,}				
\newcommand{\intpig}{\int\!\frac{d^4x}{(2\pi)^2}\sqrt{g}\,}				
\newcommand{\intpiG}{\int\!\frac{d^4x}{(2\pi)^2}\sqrt{G}\,}
\renewcommand{\a}{\alpha}
\renewcommand{\b}{\beta}
\renewcommand{\d}{\delta}
\renewcommand{\th}{\theta}
\renewcommand{\r}{\rho}
\renewcommand{\t}{\tau}
\renewcommand{\Xi}{\Xi}
\title{\begin{flushright}
       \small{UWThPh-2010-12}
       \end{flushright}
\vspace{3em}
Curvature and Gravity Actions for Matrix Models II:\\[1ex] the case of general Poisson structures}
\author{Daniel N. Blaschke\footnote{daniel.blaschke@univie.ac.at}~, Harold Steinacker\footnote{harold.steinacker@univie.ac.at}}
\date{\version}
\begin{document}
\maketitle

\begin{center}
\renewcommand{\thefootnote}{\fnsymbol{footnote}}
\textit{Faculty of Physics, University of Vienna\\
Boltzmanngasse 5, A-1090 Vienna (Austria)}
\vspace{0.5cm}
\end{center}%
\begin{abstract}

We study the geometrical meaning of higher-order terms in matrix models of Yang-Mills type
in the semi-classical limit, generalizing recent results \cite{Blaschke:2010rg} 
to the case of 4-dimensional space-time geometries with general Poisson structure. 
Such terms are expected to arise e.g. upon quantization of the IKKT-type  models. 
We identify terms which depend only on the intrinsic geometry and curvature,
including modified versions of the Einstein-Hilbert action, as well as terms
which depend on the extrinsic curvature. 
Furthermore, a mechanism is found which implies that the 
effective metric $G$ on the space-time brane
$\cM \subset \R^D$ ``almost'' coincides with the induced metric $g$. 
Deviations from $G=g$ are suppressed, and characterized by the would-be $U(1)$ gauge field.

\end{abstract}

\newpage
\tableofcontents

\section{Introduction and background}\label{sec:background}

This paper is a continuation of our previous work \cite{Blaschke:2010rg},
where gravitational actions, in particular an analog of the Einstein-Hilbert action, 
were obtained from higher-order terms in matrix models of Yang-Mills type. 

In this framework~\cite{Steinacker:2007dq,Grosse:2008xr,Steinacker:2010rh}, space-time is realized as 
quantized Poisson manifold
$\cM\subset \R^D$ with 
an induced metric $g_{\mu\nu}$ and Poisson tensor $\theta^{\mu\nu}$. These 
structures determine  an effective 
gravitational metric $G^{\mu\nu} = e^{-\s}\theta^{\mu\mu'}\theta^{\nu\nu'} g_{\mu'\nu'}$,
to which matter couples more-or-less 
as in general relativity (GR). 
Since generic 4-dimensional geometries can be 
realized (at least locally) as sub-manifold $\cM \subset \R^{10}$ 
\cite{Friedman:1961}, 
this provides a 
suitable framework for a pre-geometric, ``emergent'' theory of gravity.
As an illustration, a realization of the 
Schwarzschild geometry in this approach is presented in Ref.~\cite{Blaschke:2010ye}.

The dynamics of gravity in this framework and its relation resp. deviation from
general relativity is not yet very well understood. Upon quantization, 
various  higher-order terms are expected to arise in the matrix model, 
or alternatively such terms can be added by hand.
In \cite{Blaschke:2010rg}, we identified a matrix model action 
which in the semi-classical limit reduces to $\intg e^{2\sigma} R[g]$, 
for the most natural case of geometries with $G_{\m\n}=g_{\m\n}$.
However, it turns out that there are several possible matrix actions 
which reduce to the same semi-classical form for $G_{\m\n}=g_{\m\n}$.
Moreover, in order to derive the equations of motion for the geometry,
it is necessary to consider variations which violate this condition. 
In the present paper, we obtain a slightly modified action which for 
coinciding metrics reduces to the Einstein-Hilbert action, and which 
is tensorial (i.e. depends only on the intrinsic geometry of $\cM\subset \R^D$)
for general $G_{\m\n}\neq g_{\m\n}$.
We also identify several other terms which have an intrinsic geometrical meaning.
Some of these terms depend also on the Poisson structure.
There are also ``potential'' terms which may set 
the non-commutativity (NC) scale $e^{-\s}$, as well as
terms which depend on the extrinsic geometry, i.e. the embedding
of $\cM\subset \R^D$.
This should be the beginning of a more systematic study.

An important issue which arises in this context is the role of 
the Poisson or NC structure $\theta^{\mu\nu}$, which in particular determines the difference
$h_{\mu\nu} = G_{\mu\nu} - g_{\mu\nu}$. This Poisson structure can be viewed as
would-be $U(1)$ gauge field, and is governed mainly by the 
``bare'' Yang-Mills term in the matrix model.
We show that this action suppresses 
$h_{\mu\nu}$, and singles out self-dual and anti-selfdual Poisson structures 
with $G_{\mu\nu} = g_{\mu\nu}$
as vacuum solutions. In the case of Minkowski signature, this holds once
a specific complexification of Poisson structures is adopted, which appears 
to be very natural. 
This is important progress in the understanding of emergent gravity in these models, and 
exhibits more clearly the relation with general relativity.

In the present work, we restrict ourselves essentially
to the semi-classical limit of the 
matrix model. Of course, the main appeal for this framework compared with other 
descriptions of gravity is the fact that it goes beyond the 
classical concepts of geometry: Space-time is not put in by hand
but emerges, realized as {\nc} space with an effective geometry,
gauge fields, and matter. Moreover, the IKKT matrix model \cite{Ishibashi:1996xs}
(which is the prime candidate of this class of models with $D=10$)  
can alternatively be viewed as $\cN=4$ supersymmetric 
Yang-Mills gauge theory
on $\R^4_\th$, and hence it is expected to define a good quantum theory.
Therefore these models provide promising candidates for 
a quantum theory of fundamental interactions including gravity.
Moreover, there are several intriguing hints that the role of vacuum energy 
in this framework may be different than in GR.
Nevertheless, much more work remains to be done in order to fully understand
this class of models, and we hope that the current paper provides useful results and tools
for that purpose.

This paper is organized in the following way: We start by reviewing properties 
and important relations of the current framework of matrix models and emergent gravity
in \secref{sec:basic}. This will also fix our notation for the remaining sections. 
We then continue \secref{sec:matrixmodels-intro} by deriving relations for the special case 
of a 4-dimensional embedded manifold $\cM^4\subset\R^D$, and discuss connections and curvature. 
\secref{sec:extensions} will be devoted to higher order extensions to Yang-Mills matrix models 
and their semi-classical limit, whose implications will be discussed in \secref{sec:eom}.

\section{Matrix models and their geometry} 
\label{sec:matrixmodels-intro}

We briefly collect the essential ingredients of the matrix model framework
for emergent gravity, referring 
e.g. to the recent review \cite{Steinacker:2010rh} for more details.

\subsection{Reviewing the basic ingredients}
\label{sec:basic}
The starting point is given by the matrix model of Yang-Mills type, 
\begin{align}
S_{YM}&=-\Tr\co{X^a}{X^b}\co{X^c}{X^d}\eta_{ac}\eta_{bd}\,,
\label{S-YM}
\end{align}
where $\eta_{ac}$ is the (flat) metric of a $D$ dimensional embedding space 
(i.e. $a,b,c,d\in1,\ldots,D$). It can be purely Euclidean, or have one or more time-like directions. 
The ``covariant coordinates'' $X^a$ (cf.~\cite{Madore:2000en}) 
are Hermitian matrices, resp. operators acting on a separable Hilbert space $\mathcal{H}$. 
The commutator of two coordinates will be denoted as
\begin{align}
\co{X^a}{X^b}&=i \th^{ab}\, .
\end{align}
We are interested in configurations which can be interpreted as  $2n$ dimensional
 {\nc} space $\mathcal{M}_\th^{2n}$, in the spirit of non-commutative geometry.
Thus we consider configurations where $2n$ of the matrices (henceforth called $X^\mu$) 
generate a non-commutative algebra interpreted as
{\nc} spaces $\mathcal{M}_\th^{2n}$, and the remaining $D-2n$ matrices are 
(quantized) functions of the $X^\mu$, i.e. functions on $\mathcal{M}_\th^{2n}$.
In other words, 
we split\footnote{More generally, all of the $X^a$
are interpreted as functions on $\mathcal{M}_\th^{2n}$ subject to $D-2n$ relations.
Examples for such NC submanifolds realized by matrix models have been known for a long time,
cf. \cite{Banks:1996nn,Ishibashi:1996xs}.
} the matrices resp. coordinates as
\begin{align}
X^a=\left(X^\m,\phi^i\right)\,,\qquad \m=1,\ldots,2n\,,\qquad i=1,\ldots,D-2n\,,
\end{align}
so that the $\phi^i(X) \sim \phi^i(x)$ in the semi-classical limit define
an embedding of a $2n$ dimensional submanifold
\be
\mathcal{M}^{2n}\hookrightarrow \R^D .
\ee 
Moreover, we can interpret\footnote{In the special case where $\th^{\m\n}$ is constant, this 
leads to {\nc} field theories --- see~\cite{Doplicher:1994tu,Douglas:2001ba} 
for a review of the topic. However, a dynamical commutator seems essential in the context of gravity.}
\begin{align}
\co{X^\m}{X^\n}&\sim i \th^{\m\n}(x)\,
\end{align}
in the semi-classical limit as a Poisson structure on $\mathcal{M}^{2n}$.
Thus we are considering quantized 
Poisson manifolds $(\cM^{2n},\theta^{\mu\nu})$, with quantized
embedding functions $X^a$.
Throughout this paper, $\sim$ denotes the 
semi-classical limit, where commutators are replaced by Poisson brackets.
We will assume that $\th^{\m\n}$ is non-degenerate, so that its 
inverse matrix $\th^{-1}_{\m\n}$ defines a symplectic form 
on $\mathcal{M}^{2n}$.
The sub-manifold $\cM^{2n}\subset\R^D$ 
is equipped with a non-trivial induced metric\footnote{For a related discussion see e.g. \cite{Paston:2007qr}.}
\begin{align}
g_{\m\n}(x)=\pa_\m x^a \pa_\n x^b\eta_{ab}
=\eta_{\m\n}+\pa_\m \phi^i\pa_\n\phi^j\eta_{ij}\,,
\label{eq:def-induced-metric}
\end{align}
via pull-back of $\eta_{ab}$. 
Finally, we define the following quantities \cite{Steinacker:2008ri}:
\begin{align}\label{eq:notation}
G^{\m\n}&=e^{-\s}\th^{\m\r}\th^{\n\s}g_{\r\s}\,, 
      & \eta&=\inv{4}e^\s G^{\m\n}g_{\m\n}\,, \nonumber\\
\r&=\sqrt{\det{\th^{-1}_{\m\n}}}\,, 
       &  e^{-\s}&=\frac{\r}{\sqrt{\det{G_{\m\n}}}}\,.
\end{align}
The last relation gives a unique definition for $e^{-\sigma}$ provided $n>1$, which we assume.
It is easy to see that the kinetic term for scalar fields on  $\cM^{2n}$
is governed by the effective metric $G_{\mu\nu}(x)$, and in fact
the same metric also governs non-Abelian gauge fields 
and fermions  
in the matrix model (up to possible conformal factors), 
so that $G_{\mu\nu}$ {\em must} be interpreted as
gravitational metric. 
Since the embedding $\phi^i$ is dynamical, the model describes a 
theory of gravity realized on dynamically determined submanifolds of $\R^D$.
We also recall that
\be
\Tr\, \phi \sim \int\! \frac{d^{2n}x}{(2\pi)^n}\, \sqrt{G}\, e^{-\sigma} \phi(x)
\ee
in the semi-classical limit, and
note the remarkable identity
\begin{align}
|G_{\mu\nu}(x)| = |g_{\mu\nu}(x)| , \qquad \mbox{2n=4}
\label{G-g-4D}
\end{align}
which holds on 4-dimensional $\cM^4 \subset \R^D$.
It is also useful to define the following tensor 
\be
{\cJ^\mu}_\nu = e^{-\sigma/2} \theta^{\mu\mu'} g_{\mu'\nu} = -e^{\sigma/2} G^{\mu\mu'} \theta^{-1}_{\mu'\nu} 
\label{J-def}
\ee
which satisfies
\bea
{(\cJ^2)^{\mu}}_\rho &=& - G^{\mu \nu} g_{\nu \rho} 
\,, \nn\\
\tr \cJ^2 &=& - 4 e^{-\sigma} \eta \equiv - (gG) 
\,,
\eea
where `$\tr$' denotes the trace over Lorentz indices.

In Ref.~\cite{Blaschke:2010rg}, we focused on the particular case of 4-dimensional geometries with 
\begin{align}
G^{\m\n}&=g^{\m\n} \qquad \rightarrow \qquad \eta=e^\s\, .
\label{eq:Gisg-etaisesigma}
\end{align}
Clearly, this defines an almost-K\"ahler manifold with almost-complex structure $\cJ^2 = -1$. 
For such geometries to be consistent in the case of Minkowski signature, we have to assume that
$\th^{\m\n}$ has imaginary time-like components, which
is natural in view of the correspondence $X^0 \to iT$, as discussed in \cite{Steinacker:2010rh}.
It is not hard to see that this corresponds to $\th^{\m\n}$ being self-dual with respect to the 
metric $g_{\m\n}$ (cf. \secref{sec:4d-relations} and Ref.~\cite{Steinacker:2008ya}). Such $\th^{\m\n}$ indeed exist
for generic geometries\footnote{with suitable technical assumptions, such as 
global hyperbolicity or asymptotic flatness.}.
We then showed that the Einstein-Hilbert action can be obtained by a certain matrix
action \eq{E-H-action-prev}. However,
variations of $\th^{\m\n}$ away
from a self-dual case lead to metric variations 
\be
G_{\mu\nu} = g_{\mu\nu} + h_{\mu\nu} 
\,. \label{G-h-def}
\ee
Therefore, 
in order to derive the equations of motion for both the (embedding) metric
as well as the Poisson structure $\th^{\m\n}$, it is necessary to allow at least small
deviations from $G_{\m\n}=g_{\m\n}$.
We will in fact identify a mechanism in \secref{sec:eom} which generically implies $G \approx g$ 
to a very good approximation, at least for 
geometries with mild curvature. This justifies to consider only 
linearized corrections in $h_{\mu\nu}$, and provides an important step towards 
clarifying the relation with general relativity.

\paragraph{Notation.}
We will adopt the convention that Latin matrix indices are raised and lowered 
with $\eta_{ab}$ throughout this paper (resp. $\d_{ab}$ in the Euclidean case). 
As we consider deviations from the self-dual geometries introduced above, we will inevitably encounter 
two types of covariant derivatives: those with respect to the effective metric $\nabG:=\nabla[G]$, 
and those with respect to the induced metric $\nabg:=\nabla[g]$. We will use this notation throughout the remainder of this paper. 
Furthermore, we will use the abbreviations $(Gg) \equiv G^{\mu\nu}g_{\mu\nu}$ and $(Gg)^\m_\a\equiv G^{\m\r}g_{\r\a}$.

\subsection{Special relations in \texorpdfstring{$2n=4$}{2n=4} dimensions}
\label{sec:4d-relations}

In this section we collect some basic results on the geometry 
of $\cM^4\subset \R^D$ in the presence of the structures defined above. 
We consider the case of general metrics $G_{\mu\nu} \neq g_{\mu\nu}$ 
on  $2n = 4$ dimensional manifolds 
where the tensor ${\cJ^\mu}_\nu$ defined in \eqref{J-def} becomes unimodular, i.e. $\det \cJ = 1$. 
This leads to the existence of a remarkable identity which we will now derive. 
Consider first the Euclidean case.
Since everything is formulated in a tensorial way, 
we can diagonalize the embedding metric at that point
$g_{\mu\nu}|_p = \d_{\mu\nu}$,
and bring the Poisson tensor resp. the symplectic form into canonical form
\be
\omega = \theta^{-1}\,(\a\, dx^0 dx^3 \pm \a^{-1} dx^1 dx^2)
\label{omega-standard-E}
\ee
at $p \in \cM$ using a suitable $SO(4)$ rotation. This leads to
\begin{align}
G^{\mu\nu} 
&= \diag(\a^2,\a^{-2},\a^{-2},\a^2) \qquad\mbox{at}\quad p \in \cM
\,,
\end{align}
and similarly ${\cJ^\mu}_\nu = -\diag(\a^2,\a^{-2},\a^{-2},\a^2)$ at $p \in \cM$.
In particular, it follows that
\be
\frac 14 (Gg) = e^{-\sigma}\eta = \frac 12(\a^2+\a^{-2}) \,\, \geq \,\, 1 \,.
\label{eta-sigma-alpha}
\ee
Furthermore, we obtain the following characteristic equation\footnote{If we would consider 
real $\theta^{\mu\nu}$ in the 
Minkowski case, this relation would be replaced by 
$\cJ^2 + 2 e^{-\sigma} \eta - \cJ^{-2} = 0$.} for $\cJ^2$ 
 \cite{Steinacker:2008ya}:
\bea
{(\cJ^2)^\mu}_\nu + 2 e^{-\sigma} \eta {\d^\mu}_\nu + {(\cJ^{-2})^\mu}_\nu = 0 
\,, \label{4D-id}
\eea
or equivalently
\be
(GgG)^{\mu\nu} = - \left(\cJ^2 G\right)^{\m\n} = 2 e^{-\s} \eta G^{\mu\nu} - g^{\mu\nu}
= \inv{2}(Gg)G^{\mu\nu} - g^{\mu\nu}
\,. \label{4D-id-2}
\ee
Furthermore, observe that $\star (dx^0 dx^3) = dx^1 dx^2$
where $\star$ denotes the Hodge star
defined by $\varepsilon^{\mu\nu\rho\sigma}$ and 
$g_{\mu\nu}$ on $\cM^4$. This means that the corresponding symplectic form
is (anti-) self-dual ((A)SD) if and only if
\be
\star \omega = \pm \omega 
\quad \Leftrightarrow \quad
\a = 1 \,\,\,\mbox{resp.} \,\, e^{-\sigma}\eta = 1 
\quad \Leftrightarrow \quad G_{\mu\nu} = g_{\mu\nu} 
\quad \Leftrightarrow \quad \cJ^2 = -1 
\,,
\label{selfdual-E}
\ee
in which case $\cM^4$ becomes an almost-K\"ahler manifold with almost-complex structure $\cJ$. 
These statements generalize to the case of Minkowski signature, provided
we consider complexified $\theta^{\mu\nu}$ with imaginary 
time-like components $\theta^{0\nu}$, see \cite{Steinacker:2010rh}.

Furthermore, we also note the following useful identity
\be
\del_\a (\rho\theta^{\mu\a}) =0
\ee
which holds in any coordinates, and follows from the Jacobi identity.
On $2n=4$-dimensional branes, it implies
\bea
0 &=& \del_\a (e^{-\sigma} \sqrt{|g|}\theta^{\mu\a} ) = 
\sqrt{|g|}\,\nabg_\a (e^{-\sigma} \theta^{\mu\a}) \nn\\
&=& \del_\a (e^{-\sigma} \sqrt{|G|}\theta^{\mu\a} ) = 
\sqrt{|G|}\,\nabG_\a (e^{-\sigma} \theta^{\mu\a}) 
\label{nabla-theta-id}
\eea
using $|g| = |G|$.
Note furthermore that
\bea
G^{\mu\a}\nabg_\a \theta^{-1}_{\mu\nu} 
&=& \nabg_\a (G^{\mu\a}\theta^{-1}_{\mu\nu}) - \theta^{-1}_{\mu\nu}\nabg_\a G^{\mu\a} \nn\\
&=& -\nabg_\a (e^{-\sigma} \theta^{\mu\a} g_{\mu\nu}) - \theta^{-1}_{\mu\nu}\nabg_\a G^{\mu\a} \nn\\
&=& - \theta^{-1}_{\mu\nu}\nabg_\a G^{\mu\a} 
\label{div-theta-g-id}
\eea
using the basic identity \eq{nabla-theta-id}.

\paragraph{Determinants.}
Consider the scalar function
\bea
\det \cJ = e^{-n\sigma} \det (\theta^{\mu\nu})\det (g_{\mu\nu}) 
\eea
which satisfies $\det \cJ = 1$ in $2n=4$ dimensions. In that case, it follows that
\begin{align}
\del_\a e^{2\sigma} &= \del_\a \det(\theta^{\mu\eta} g_{\eta\nu})
= e^{2\sigma}
g^{\mu\s}\theta^{-1}_{\s\nu} \del_\a( \theta^{\nu\eta}g_{\eta\mu}) \nn\\
&= e^{2\sigma}
\(\theta^{-1}_{\eta\nu} \del_\a \theta^{\nu\eta} + g^{\mu\eta}\del_\a g_{\eta\mu}\)
\,.
\end{align}
We can replace $\del_\a$ with any covariant derivative operator $\nabla_\a$ in this formula.
In particular, for $\nabg=\nabla[g]$ we obtain
\bea
\del_\a e^{2\sigma} 
&=& e^{2\sigma} \theta^{-1}_{\eta\nu} \nabg_\a \theta^{\nu\eta} 
\,.
\eea
Similarly, using ${\cJ^\mu}_\nu = -e^{\sigma/2} G^{\mu\eta}\theta^{-1}_{\eta\nu}$
we get
\bea
\del_\a e^{-2\sigma} 
&=& e^{-2\sigma}  \theta^{\nu\eta} \nabG_\a\theta^{-1}_{\eta\nu}
\,,
\eea
so for $2n=4$ we have
\bea
2\del_\a \sigma &=& \theta^{-1}_{\eta\nu} \nabg_\a \theta^{\nu\eta} 
= \theta^{-1}_{\eta\nu} \nabG_\a \theta^{\nu\eta}\, . 
\eea
Since $\det(G^{\mu\eta} g_{\eta\nu})=1$ in  $2n=4$ dimensions, 
a similar argument yields 
\begin{align}
0 &= \del_\a \det(G^{\mu\eta} g_{\eta\nu}) 
= g^{\mu\s}G_{\s\nu} \del_\a(G^{\nu\eta}g_{\eta\mu}) \nn\\
&= G_{\eta\nu} \del_\a G^{\nu\eta} + g^{\mu\eta}\del_\a g_{\eta\mu}
\,,
\end{align}
and likewise for any covariant derivatives. This implies 
\be
 g^{\mu\eta}\nabG_\a g_{\eta\mu} = 0 = G_{\eta\nu} \nabg_\a G^{\nu\eta}
\,. \label{g-det-nabla-id}
\ee
In the computations of the subsequent sections, we will make use of the important relations \eqref{4D-id-2}, 
\eqref{nabla-theta-id}, \eqref{div-theta-g-id} and \eqref{g-det-nabla-id} in many places.

\subsection{Intrinsic curvature.}

Since we consider general geometries $G_{\m\n}\neq g_{\m\n}$ in this paper, 
we will inevitably encounter the tensor 
\bea
C_{\a;\mu\nu} := \del_\a x^a \nabG_\mu\del_\nu x_a = \frac 12 
\(\nabG_\mu g_{\nu\a} + \nabG_\nu g_{\mu\a} - \nabG_\a g_{\mu\nu}\)
\,,
\label{C-tensor}
\eea
in subsequent computations. 
Contracting this tensor with $G^{\m\n}$, one derives 
\begin{subequations}\label{del-g-id-both}
\begin{align}
\del_\a x^a \Lap x_a 
&\stackrel{\phantom{2n=4}}{=} \nabG_\mu(G^{\mu\nu}g_{\nu\a}) - 2 \del_\a (e^{-\sigma}\eta)
 = \nabG^\nu g_{\nu\a} - \frac 12 \del_\a (gG) 
\,, \label{del-g-id-1}\\
&\stackrel{2n=4}{=} -G_{\a\nu}\nabG_\mu g^{\mu\nu} \,.
\label{dXboxX-4D} \\
\del_\a x^a \nabG_\mu\del^\a x_a 
&\stackrel{\phantom{2n=4}}{=} \inv{2} \pa_\m(Gg) 
\,, \label{del-g-id-2}
\end{align}
\end{subequations}
where the 4D identity \eq{4D-id-2} is used in \eq{dXboxX-4D} and ``l.h.s.$\stackrel{2n=4}{=}$r.h.s.'' denotes equality iff $2n=4$.  

Keeping these relations in mind, we now derive the curvature tensor 
with respect to the metrics $G_{\m\n}$ and $g_{\m\n}$: 
For a general embedding $\cM \subset \R^D$ with Cartesian embedding functions $x^a: \cM\hookrightarrow \R^D$, 
consider the expression
\bea
&& 
\nabla_\sigma\nabla_\mu x^a\nabla_\rho\nabla_\nu x_a - \nabla_\sigma\nabla_\nu x^a\nabla_\mu\nabla_\rho x_a \nn\\
&=& \nabla_\sigma(\nabla_\mu x^a\nabla_\rho\nabla_\nu x_a) -\nabla_\mu x^a\nabla_\sigma\nabla_\rho\nabla_\nu x_a 
 - \nabla_\rho(\nabla_\sigma\nabla_\nu x^a\nabla_\mu x_a)
+\nabla_\rho\nabla_\sigma\nabla_\nu x^a\nabla_\mu x_a \nn\\
&=& \nabla_\sigma C_{\mu;\rho\nu} - \nabla_\rho C_{\mu;\s\nu}
+[\nabla_\rho,\nabla_\sigma]\nabla_\nu x^a\nabla_\mu x_a \nn\\
&=&  \nabla_\sigma C_{\mu;\rho\nu} - \nabla_\rho C_{\mu;\s\nu}
+{(Gg)^{\eta}}_{\mu}  R_{\rho\sigma\nu\eta}[G] 
\,. \label{R-dX-formula}
\eea
Unless stated otherwise, we will always understand $R_{\rho\sigma\nu\eta} \equiv R_{\rho\sigma\nu\eta}[G]$ throughout this paper.
All the terms in \eqref{R-dX-formula} are tensorial, and we obtain
\begin{align}\label{eq:def-RG}
(Gg)^{\eta}_{\mu} R_{\rho\sigma\nu\eta}[G]
&= \nabla_\sigma\nabla_\mu x^a\nabla_\rho\nabla_\nu x_a 
- \nabla_\sigma\nabla_\nu x^a\nabla_\mu\nabla_\rho x_a 
 -\nabla_\sigma C_{\mu;\rho\nu} + \nabla_\rho C_{\mu;\s\nu}
\,. 
\end{align}
Repeating this calculation with $\nabG$ replaced by the covariant derivative with respect to the induced metric 
$\nabla[g]=\nabg$, we recover the Gauss-Codazzi theorem due to $\nabg_\m x^a\nabg_\r\nabg_\n x_a=0$:
\begin{align}
R_{\r\s\n\m}[g] &=g_{\m\t}{R[g]_{\r\s\n}}^\t=\nabg_\s\nabg_\m x^a\nabg_\r\nabg_\n x_a - \nabg_\s\nabg_\n x^a\nabg_\m\nabg_\r x_a 
\,. \label{eq:def-Rg}
\end{align}
For the self-dual case $C_{\mu;\rho\nu} = \nabla_\mu x^a\nabla_\rho\nabla_\nu x_a=0$, and 
both curvature tensors \eqref{eq:def-RG} and \eqref{eq:def-Rg} coincide.

\paragraph{Relating $R[g]$ and $R[G]$.}
The covariant derivatives $\nabG_\mu$ and $\nabg_\mu$ are related 
via the tensors $C_{\a;\mu\nu}$ as follows:
\bea
\nabg_\mu V_\nu &= & \nabG_\mu V_\nu -  C_{\a;\mu\nu} g^{\a\b} V_\b  
 =  \nabG_\mu V_\nu + \tilde C_{\a;\mu\nu} G^{\a\b} V_\b  
\,,
\label{rel-cov-der}
\eea
for some vector $V_\n$, and 
where $\tilde C_{\a;\mu\nu}$ is defined by 
replacing $g$ with $G$ (and hence $\nabG$ with $\nabg$) in \eq{C-tensor}. This implies
\bea
&& g^{\a\b} C_{\a;\mu\nu} = \frac 12 g^{\a\b}
\(\nabG_\mu g_{\nu\a} + \nabG_\nu g_{\mu\a} - \nabG_\a g_{\mu\nu}\) \nn\\
&=& -G^{\a\b}\tilde C_{\a;\mu\nu} = - \frac 12 G^{\a\b}
\(\nabg_\mu G_{\nu\a} + \nabg_\nu G_{\mu\a} - \nabg_\a G_{\mu\nu}\) 
\,,
\label{C-tensor-G}
\eea
which has a number of useful consequences:
\bea\label{cons-g-1}
g^{\a\mu} C_{\a;\mu\nu} &=& \frac 12 g^{\a\mu}\nabG_\nu g_{\mu\a} 
= -G^{\a\mu}\tilde C_{\a;\mu\nu} = - \frac 12 G^{\a\mu} \nabg_\nu G_{\mu\a}  
 =0 \,, \nn\\
 g^{\a\b} g^{\mu\nu} C_{\a;\mu\nu} &\stackrel{2n=4}{=}&  g^{\a\b}g^{\mu\nu}\nabla_\mu g_{\nu\a}  
 = - \nabG_\mu g^{\mu\b}   \nn\\
&=& -  G^{\a\b}g^{\mu\nu} \nabg_\mu G_{\nu\a} 
 + \inv2 G^{\a\b} \del_\a (g^{\mu\nu} G_{\mu\nu}) \,,\nn\\
 g^{\a\b} G^{\mu\nu} C_{\a;\mu\nu} &=&  g^{\a\b}G^{\mu\nu}\nabG_\mu g_{\nu\a} 
 -\inv2 g^{\a\b}\pa_\a(Gg) \nn\\
&\stackrel{2n=4}{=}& -  G^{\a\b}G^{\mu\nu} \nabg_\mu G_{\nu\a} 
 = \nabg_\mu G^{\mu\b} 
\,,
\eea
where we have used \eqref{g-det-nabla-id}. 
Furthermore, we may define projectors on the tangential resp. normal bundle of $\cM \subset \R^D$ as
\begin{align}
\cP_T^{ab}&=g^{\m\n}\pa_\m x^a\pa_\n x^b\,, & \cP_N^{ab}&=\eta^{ab}-\cP_T^{ab}\,.
\label{eq:projectors}
\end{align}
Hence, by the very definition of the covariant derivative associated to $g_{\mu\nu}$, 
we have 
\begin{align}
\nabg_\sigma\nabg_\nu x^a &=\nabla_\sigma\nabla_\nu x^a - g^{\a\b}  C_{\b;\sigma\nu}\del_\a x^a \nn\\
&= \nabla_\sigma\nabla_\nu x^a - g^{\a\b}\del_\a x^a \del_\b x^b \nabla_\sigma\nabla_\nu x_b \nn\\
&= \cP_N^{ab} \nabla_\sigma\nabla_\nu x_b 
\,. \label{eq:app-rel-lap-Lap-x}
\end{align}
This allows to relate the curvature tensors\footnote{cp. also~\cite{Wald:1984}.} 
associated to $G_{\mu\nu}$ resp. $g_{\mu\nu}$:
\begin{align}
R_{\rho\sigma\nu\mu}[g]
&= \nabg_\sigma\nabg_\mu x^a\nabg_\rho\nabg_\nu x_a 
- \nabg_\sigma\nabg_\nu x^a\nabg_\mu\nabg_\rho x_a \nn\\
&= \cP_N^{ab}\nabla_\sigma\nabla_\mu x_a\nabla_\rho\nabla_\nu x_b
- \cP_N^{ab}\nabla_\sigma\nabla_\nu x_a\nabla_\mu\nabla_\rho x_b \nn\\
&= (Gg)^{\eta}_{\mu} R_{\rho\sigma\nu\eta}[G] 
+\nabla_\sigma C_{\mu;\rho\nu} - \nabla_\rho C_{\mu;\s\nu}
- C_{\a;\sigma\mu} C_{\b;\rho\nu} g^{\a\b}
+ C_{\a;\sigma\nu} C_{\b;\mu\rho} g^{\a\b} 
\,, \nn\\
R_{\rho\nu}[g] 
&= R_{\rho\nu}[G]
+ g^{\s\mu}\nabla_\sigma C_{\mu;\rho\nu} - g^{\s\mu} \nabla_\rho C_{\mu;\s\nu}
-  g^{\s\mu}C_{\a;\sigma\mu} C_{\b;\rho\nu} g^{\a\b}
+  g^{\s\mu}C_{\a;\sigma\nu} C_{\b;\mu\rho} g^{\a\b} 
\,, \label{RG-Rg-relation}
\end{align}
using \eq{eq:projectors} and \eq{R-dX-formula}.
The last terms can be evaluated using 
{\allowdisplaybreaks
\begin{subequations}\label{eq:C-relations}
\bea
g^{\a\b} C_{\a;\sigma\nu} C_{\b;\mu\rho}g^{\rho\nu} g^{\sigma\mu} 
&=& - \frac 34 g^{\rho\nu}\nabla_\nu g^{\b\mu} \nabla_\rho g_{\mu\b}
  - \frac 12 g_{\rho\mu}\nabla_\b g^{\rho\nu}\nabla_\nu g^{\b\mu} 
\,, \label{gCCgg}\\
g^{\a\b} C_{\a;\sigma\mu} g^{\sigma\mu} C_{\b;\rho\nu}g^{\rho\nu}
&\stackrel{2n=4}{=}&  g_{\b\nu} \nabla_\a  g^{\a\b} \nabla_\rho g^{\rho\nu} 
\,,\label{gCgCg}\\
 G^{\a\b} C_{\a;\sigma\nu} C_{\b;\mu\rho}G^{\rho\nu} G^{\sigma\mu} 
&\stackrel{2n=4}{=}&  4 \del_\nu(e^{-\s}\eta)\del^\nu (e^{-\s}\eta) 
 + 2\del_\a (e^{-\s}\eta)  \nabla_{\mu}g^{\mu\a} \nn\\*
 &&  - \frac 34\nabla_\nu g^{\b\mu}\nabla^\nu g_{\mu\b} 
- \frac 12G_{\mu\b}\nabla_\a g^{\mu\rho}\nabla_\rho g^{\a\b} 
\,, \label{CC-GGG}\\
 g^{\s\mu}\nabla_\sigma C_{\mu;\rho\nu} - g^{\s\mu} \nabla_\rho C_{\mu;\s\nu} 
&=& \frac 12 g^{\s\mu}\nabla_\sigma (\nabla_\rho g_{\mu\nu}+\nabla_\nu g_{\rho\mu}-\nabla_\mu g_{\rho\nu})
- \frac 12 g^{\s\mu} \nabla_\rho \nabla_\nu g_{\s\mu}\nn\\*
&=&  \inv2 \Big(-\nabla_\rho\nabla^\mu h_{\mu\nu} -R_{\r\b}[g] h^{\b\a}g_{\a\nu} 
 + (\rho\leftrightarrow\nu) \Big) \nn\\*
&& + \frac 12 \Box_g h_{\rho\nu}  + R_{\a\rho\b\nu}[g] h^{\a\b} + \cO(h^2) 
\,, \label{nablaC}
\eea
\end{subequations}
as derived in \appref{app:CC-GGG}. 
Hence to leading order in $h_{\m\n} = G_{\mu\nu} - g_{\mu\nu}$, we have 
}
\begin{align}
R_{\rho\nu}[g] 
&=  R_{\rho\nu}[G]
-\frac 12 \Big(\nabla_\rho\nabla^\mu h_{\mu\nu}  +R_{\rho\b}[g] h^{\b\a}g_{\a\nu} 
 + (\rho \leftrightarrow \nu)\Big) 
 +\frac 12 \Box_g h_{\rho\nu} 
+ R_{\a\rho\b\nu}[g] h^{\a\b} \nn\\
&\quad + \cO(h^2),\nn\\
R[g] &= R_{\rho\nu}[G] g^{\rho\nu}
- \nabla^\nu\nabla^\mu h_{\mu\nu} \,\, + \cO(h^2)\,, \nn\\
R[G] &= R_{\rho\nu}[g] G^{\rho\nu}
+ \nabla^\nu\nabla^\mu h_{\mu\nu} \,\, + \cO(h^2)\,.
\label{RG-Rg-relation-h}
\end{align}

\subsection{Cartesian tensors}

Now consider the following expressions, which play an important role in the following:
\bea
H^{ab} &=& \inv{2}\aco{\co{X^a}{X^c}}{\co{X^b}{X_c}}  \quad 
\sim\,\, -e^\sigma G^{\mu\nu}\del_\mu x^a \del_\nu x^b 
\,, \nn\\
H &=&  H^{ab} \eta_{ab} =  \co{X^c}{X^d}\co{X_{c}}{X_{d}}\, 
\,\,\sim \,\, -e^\sigma G^{\mu\nu} g_{\mu\nu} = -4 \eta(x) 
\,. \label{eq:def-H}
\eea
The matrix ``energy-momentum tensor'' is then defined by \cite{Steinacker:2008ri}
\begin{align}
T^{ab}&= H^{ab} - \frac 14\eta^{ab} H\, \quad
\sim\,\, \eta\eta^{ab}  - e^\sigma G^{\mu\nu}\del_\mu x^a \del_\nu x^b 
\,.\label{eq:def-T}
\end{align}
It is instructive to consider the projectors defined in \eqnref{eq:projectors} acting on these expressions in the semi-classical limit, i.e. $(\cP_T H)^{ab}\sim H^{ab}$ and $(\cP_N T)^{ab}\sim \eta \cP_N^{ab}$. 
In the special case of $g_{\mu\nu}=G_{\mu\nu}$, the semi-classical limit of the energy-momentum tensor becomes truly related to the projectors: 
\begin{align}
T^{ab} \sim e^\sigma \cP_N^{ab}\,, \qquad \textrm{and }\quad H^{ab} \sim -e^\sigma \cP_T^{ab}
\,. \label{eq:relation-TH-PNT}
\end{align}
Moreover, then
\begin{align}
T^{ab}\Box X_a \Box X_b - \frac 12 T^{ab}\Box H_{ab}\,\,
&\sim\,\,  e^{3\s}R 
\,,\label{E-H-action-prev}
\end{align}
as shown\footnote{The derivation given in \cite{Blaschke:2010rg} for $\intg  e^{2\s}R$ 
also applies without the integral resp. trace.} in \cite{Blaschke:2010rg}. 
However, there are several similar matrix actions which for 
$g_{\m\n}=G_{\m\n}$ reduce to the same semi-classical form. 
It turns out that for general $g_{\mu\nu} \neq G_{\mu\nu}$, 
which we study in the present paper, the left-hand side of 
\eq{E-H-action-prev} is no longer intrinsic, 
i.e. it depends also on the embedding $\cM\subset \R^D$. This makes the derivation of the equations 
of motion more difficult.
However, we will identify a slightly modified matrix action which is 
intrinsic for general geometries in the semi-classical limit. 

Before we continue, let us add a brief remark concerning $H^{ab}$ in $2n=4$ dimensions: 
The 4D identity \eq{4D-id-2} implies
\begin{align}
(H^3)^{ad} -\frac 12 H (H^2)^{ad} + e^{2\sigma}H^{ad} 
&\stackrel{\phantom{2n=4}}{\sim} -e^{2\s}(Gg)^\m_\r\left(e^\s (GgG)^{\r\n} 
- 2\eta  G^{\r\n} + e^{\s}g^{\r\n}\right)\pa_\m x^a \pa_\n x^d \nn\\
&\stackrel{2n=4}{\sim} 0
\,.
\end{align}
This means that $e^{-\sigma}H^{ab}$ has 3 eigenvalues 
$\{0,\a^2,\a^{-2}\}$ with $e^{-\sigma}\eta = \frac 12(\a^2+\a^{-2})$ 
and $H\sim -4\eta$ 
(cf. \secref{sec:4d-relations} and Ref.~\cite{Steinacker:2008ya}). Hence the last relation essentially
characterizes the 4-dimensional nature of $\cM^4$, and
it also encodes the reality structure of $\theta^{\mu\nu}$ at the 
matrix level because it is non-linear.

\paragraph{Semi-classical limit of the tangential conservation law.}

The following useful results for various Poisson brackets are essentially obtained in \cite{Steinacker:2008ya}:
Since $H^{ab}$ is a scalar field on $\cM\subset \R^D$, 
we have\footnote{Notice, that we use the same symbols $H^{ab}$ and $T^{ab}$ for their respective semi-classical 
limits whenever it is clear from context what is meant.} 
\bea
\pb{x_a}{H^{ab}} 
 &=& -\theta^{\mu\nu}\del_\mu x_a \nabla_\nu(e^\sigma G^{\a\b}\del_\a x^a \del_\b x^b) \nn\\
&=& - e^\sigma G^{\a\b}\(\del_\nu\sigma g_{\mu\a}\theta^{\mu\nu} \del_\b x^b
 + \theta^{\mu\nu}\nabla_\nu g_{\mu\a} \del_\b x^b
+ g_{\mu\a}\theta^{\mu\nu} \nabla_\nu \del_\b x^b  \) \nn\\
&=& - G^{\a\b}\theta^{\mu\nu}\nabla_\nu (e^\sigma g_{\mu\a}) \del_\b x^b 
\,.
\eea
This is again tensorial, and can be written in 
a number of different ways:
\bea
\{x_a,H^{ab}\} &=&  -e^\sigma G^{\a\b}\nabla_\nu (\theta^{\mu\nu} g_{\mu\a} ) \del_\b x^b \nn\\
&=& -e^\sigma G^{\a\b}\nabla^\mu (e^\sigma \theta^{-1}_{\mu\a}) \del_\b x^b \nn\\
&=&  \(\partial_\a\eta - e^\sigma\nabla^\rho g_{\rho\a}-2\eta\partial_\a \sigma\) \theta^{\b\a}\del_\b x^b \nn\\
&=&  \(e^\sigma\Box_G x^a \del_\a x^a + \del_\a \eta\) \theta^{\a\b}\del_\b x^b 
\label{X-H-semiclass}
\eea
using the identity \eqref{nabla-theta-id} and
\bea
\theta^{\nu\mu}\partial_\mu\eta 
&=&  e^{\sigma}\nabla_\mu (G^{\mu\mu'}g_{\mu'\nu'} \theta^{\nu'\nu})
+ e^\sigma\theta^{\nu\a} \nabla^\rho g_{\rho\a}+2\eta\theta^{\nu\mu} \partial_\mu \sigma
\eea
which follows from the Jacobi identity \cite{Steinacker:2008ya}.
Together with \eq{del-g-id-1}, we obtain
\be
\{x_a,T^{ab}\} =  e^\s\(\Lap x_a \del_\a x^a\) \th^{\a\b}\del_\b x^b 
\label{T-cons-poisson}
\ee
which also follows directly from the matrix identity \eq{Ward-cons}.
For Yang-Mills matrix models, the tangential conservation law $[X_a,T^{ab}] =0$ 
holds in fact at the matrix level \cite{Steinacker:2008ri} as a consequence of the symmetry $X^a \to X^a + c^a \one$. 
However, higher order terms in the matrix model as considered below
may modify this relation. 
Note also that for 4-dimensional branes, \eq{dXboxX-4D} implies
\be
\pb{x_a}{T^{ab}} \stackrel{2n=4}{=} - e^\s \nabG_\mu g^{\mu\nu} \,G_{\nu\a}\th^{\a\b}\del_\b x^b 
\,, 
\ee
so that the tangential conservation law is equivalent to $\nabla_\mu g^{\mu\nu} =0$.

\paragraph{Exact matrix identities.}

The above semi-classical conservation law \eq{T-cons-poisson} 
can also be obtained from the following matrix identities:
\bea
\,[X_a,H^{ab}] &=& \inv{2} \(\aco{\Box X_c}{ [X^b,X^c]}  
+ \frac 12 [X^b,H]\)  
\,,\nn\\
\,[X_a,T^{ab}] &=& \inv{2}\aco{\Box X_c}{ [X^b,X^c]} \,.
\label{Ward-cons}
\eea

\section{Extensions of the matrix model action}\label{sec:extensions}

We now want to consider more general terms in the matrix model,
which in general have the form
\be
S_P[X] = \Tr (X^{a_1} \ldots X^{a_l}) P_{a_1 \ldots a_l} \,,
\ee
where $P_{a_1 \ldots a_l}$ is an invariant tensor of $SO(D)$ (resp. $SO(1,D-1)$ etc. in the case of Minkowski 
signature). 
Imposing also translational invariance $X^a \to X^a + c^a \one$, 
only terms built out of commutators are admissible.
We will organize such polynomial terms in the matrix model according to 
the power $\ell$ of matrices $X^a$, as well as the number $d$ of commutators.
It is clear that translational invariance implies $d \geq \ell/2$, and that $k = d - \ell/2$
corresponds to the number of derivatives of geometrical tensors such as $\theta^{\mu\nu}$
in the semi-classical limit.
It is thus natural to consider an expansion in $k$ as well as $\ell$.

\subsection{Matrix operators}

Before diving into the possible extensions to the matrix model action, we collect some basic ``building blocks'' 
for which we derive the following semi-classical results:
\begin{lemma}
For any matrices $\Phi \sim \phi(x), \,\,\Psi \sim \psi(x)$, we have
{\allowdisplaybreaks
\begin{subequations}
\bea
\eta^{ab}[X_a,\Phi][X_b,\Psi] &\sim&- e^{\sigma}G^{\mu\nu}\del_\mu\phi\,\del_\nu\psi  
\,, \label{metric-G}\\
\Box \Phi \equiv [X^a,[X_a,\Phi]]  &\sim& -\pb{x^b}{\pb{x^{c}}{\phi}} \eta_{bc}
= -e^\sigma \Box_G \phi
\,, \label{laplace-operator}\\[0.5ex]
H^{ab}[X_a,\Phi][X_b,\Psi] &\sim& e^{2\sigma}(GgG)^{\mu\nu}\del_\mu\phi\,\del_\nu\psi  
\,, \label{HXpXp}  \\
H^{ab} [X_a,[X_b,\Phi]] 
&\sim& e^{2\sigma} (G g G)^{\b\eta}\nabla_\b\del_\eta\phi
 + e^\sigma  \del_\b e^{\sigma} (G g G)^{\eta\b}  \del_\eta\phi  \nn \\* && 
 + \inv4 e^{2\s}  (\del_\rho (Gg) - (Gg) \del_\rho\sigma)G^{\eta\rho} \del_\eta\phi   \nn\\*
&\stackrel{g=G}{\sim}&
e^{2\sigma} \Lap\phi 
\,. \label{HXX-operator} 
\eea
In particular, for $2n=4$-dimensional branes, we have
\bea
(H^{ab} - \frac 12 H\eta^{ab}) [X_a,\Phi][X_b,\Psi]  
&\sim& - e^{2\sigma} g^{\mu\nu}\del_\mu\phi\,\del_\nu\psi  
\,, \label{HXpXp-4d}  \\
\,[X_a,\big(H^{ab} - \frac 12 H \eta^{ab}\big)  [X_b,\Phi]]
&\sim& - e^{2\sigma} (\Box_g \phi+ g^{\mu\nu}\del_\mu \sigma\del_\nu \phi )
\,. \label{HXX-operator-4d} 
\eea
\end{subequations}
}
\end{lemma}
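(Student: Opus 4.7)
The foundational tool is the basic semi-classical rule $[X^a,\Phi] \sim i\{x^a,\phi\} = i\theta^{\mu\nu}(\partial_\mu x^a)(\partial_\nu\phi)$, together with the Leibniz rule for Poisson brackets. My plan is to first dispatch the ``quadratic'' identities \eqref{metric-G}--\eqref{HXpXp}, then handle the genuinely new computation \eqref{HXX-operator}, and finally deduce the four-dimensional consequences \eqref{HXpXp-4d}--\eqref{HXX-operator-4d} as almost immediate corollaries of the $2n{=}4$ characteristic equation \eqref{4D-id-2}.

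For \eqref{metric-G}, I simply plug in $[X_a,\Phi][X_b,\Psi] \sim -\theta^{\rho\tau}\theta^{\sigma\lambda}(\partial_\rho x_a)(\partial_\sigma x_b)\partial_\tau\phi\,\partial_\lambda\psi$, contract with $\eta^{ab}$, use $\partial_\rho x_a\partial_\sigma x^a = g_{\rho\sigma}$, and recognize $\theta^{\rho\tau}\theta^{\sigma\lambda}g_{\rho\sigma}=e^\sigma G^{\tau\lambda}$ from the definition in \eqref{eq:notation}. Identity \eqref{laplace-operator} is then the special case $\Phi=X^a$ followed by recognizing $G^{\mu\nu}\partial_\mu\partial_\nu$-type terms; care must be taken to produce the correct $\sqrt{G}$-Laplacian, which is ensured by the Jacobi identity \eqref{nabla-theta-id}. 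For \eqref{HXpXp}, I substitute $H^{ab}\sim -e^\sigma G^{\alpha\beta}\partial_\alpha x^a\partial_\beta x^b$ from \eqref{eq:def-H} into the computation used for \eqref{metric-G}; the two contractions $\partial_\alpha x^a\partial_\rho x_a = g_{\alpha\rho}$ then produce the combination $G^{\alpha\beta}g_{\alpha\rho}\theta^{\rho\tau}g_{\beta\sigma}\theta^{\sigma\lambda}$, which I simplify by using twice the relation $g_{\mu\rho}\theta^{\rho\tau} = e^\sigma\theta^{-1}_{\mu\alpha}G^{\alpha\tau}$ (an immediate consequence of \eqref{J-def}) to land on $e^\sigma(GgG)^{\tau\lambda}$, producing the factor $e^{2\sigma}(GgG)^{\mu\nu}$ as claimed.

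The main obstacle is identity \eqref{HXX-operator}, where I must carefully differentiate all $x$-dependent factors. Writing
\[
H^{ab}[X_a,[X_b,\Phi]] \sim H^{ab}\,\bigl(-\{x_a,\{x_b,\phi\}\}\bigr)
\]
and expanding the inner bracket before contracting with $H^{ab}$, the derivative $\partial_\sigma(\theta^{\mu\nu}\partial_\mu x_b\,\partial_\nu\phi)$ splits into three pieces: the $\partial\theta$ piece combines with the two $g_{\alpha\rho}\theta^{\rho\sigma}$ contractions into $\theta^{-1}_{\mu\alpha}G^{\alpha\tau}\partial_\sigma\theta^{\mu\nu}$, which I rewrite using \eqref{nabla-theta-id} and \eqref{div-theta-g-id} to produce the terms involving $\partial_\beta e^\sigma$ and $\partial_\rho(Gg)$; the $\partial_\sigma\partial_\mu x_b$ piece assembles into $(GgG)^{\beta\eta}\nabla_\beta\partial_\eta\phi$ via the same contractions that gave \eqref{HXpXp} together with the $C$-tensor definition \eqref{C-tensor}; and the $\partial_\sigma\partial_\nu\phi$ piece contributes the leading Hessian. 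Collecting everything, and repeatedly using $2\partial_\alpha\sigma = \theta^{-1}_{\eta\nu}\nabg_\alpha\theta^{\nu\eta}$, yields the stated expression. In the self-dual limit $g=G$ the combination $\tfrac14(\partial_\rho(Gg)-(Gg)\partial_\rho\sigma)$ vanishes identically (since $Gg=4$ and $\partial\sigma$ drops), and $(GgG)^{\beta\eta}=G^{\beta\eta}$, producing the Laplacian $e^{2\sigma}\Box_G\phi$ after folding in the $\partial e^\sigma$ term.

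Finally, \eqref{HXpXp-4d} follows by combining \eqref{metric-G} and \eqref{HXpXp} using $H\sim -4\eta = -e^\sigma(Gg)$, together with the four-dimensional identity \eqref{4D-id-2}: the combination $(GgG)^{\mu\nu}-\tfrac12(Gg)G^{\mu\nu}$ collapses to $-g^{\mu\nu}$. Identity \eqref{HXX-operator-4d} follows in the same manner from \eqref{laplace-operator} and \eqref{HXX-operator} after an additional derivative is moved outside using the Leibniz rule and the tangential conservation identity \eqref{T-cons-poisson}; this conversion is the only subtlety, and it amounts to rewriting $[X_a,\,(H^{ab}-\tfrac12 H\eta^{ab})[X_b,\Phi]]$ by distributing the outer commutator onto the prefactor (which is harmless since $\{x_a,T^{ab}\}$ produces only terms proportional to $\partial_\alpha\sigma$ in the self-dual direction) and onto the inner bracket, whereupon the $2n=4$ identity eliminates the $(GgG)$-terms and leaves $-e^{2\sigma}(\Box_g\phi + g^{\mu\nu}\partial_\mu\sigma\,\partial_\nu\phi)$.
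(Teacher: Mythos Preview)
Your treatment of \eqref{metric-G}, \eqref{laplace-operator}, \eqref{HXpXp}, and \eqref{HXpXp-4d} matches the paper essentially verbatim. For \eqref{HXX-operator} your direct expansion is precisely the route the paper relegates to Appendix~B.1; the main text instead pairs the operator against an arbitrary test function $\Phi_1$ under the trace, uses cyclicity to write $\Tr\big(\Phi_1 H^{ab}[X_a,[X_b,\Phi_2]]\big) = -\Tr\big([X_a,H^{ab}][X_b,\Phi_2]\Phi_1\big) - \Tr\big(H^{ab}[X_b,\Phi_2][X_a,\Phi_1]\big)$, evaluates the two pieces via \eqref{X-H-semiclass} and \eqref{HXpXp}, and then strips off $\Phi_1$. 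This bilinear trick avoids tracking the three derivative pieces separately.

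For \eqref{HXX-operator-4d} your sketch has a genuine gap. The prefactor $H^{ab}-\tfrac12 H\eta^{ab}$ is \emph{not} $T^{ab}$ (which is $H^{ab}-\tfrac14 H\eta^{ab}$), so invoking \eqref{T-cons-poisson} is off by a term $-\tfrac14[X^b,H]$. More seriously, even for $T^{ab}$ itself, \eqref{T-cons-poisson} gives $\{x_a,T^{ab}\} = e^\sigma(\Box_G x^c\,\del_\a x_c)\theta^{\a\b}\del_\b x^b$, which by \eqref{dXboxX-4D} is proportional to $\nabG_\mu g^{\mu\nu}$, not to $\del_\a\sigma$; this does \emph{not} vanish for general $g\neq G$ in $2n=4$, and ``in the self-dual direction'' has no clear meaning here since \eqref{HXX-operator-4d} is claimed without any self-duality assumption. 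A direct Leibniz computation can be made to work, but it requires combining $[X_a,H^{ab}-\tfrac12 H\eta^{ab}]$ (via \eqref{X-H-semiclass} plus $\{x^b,H\}$) with $(H^{ab}-\tfrac12 H\eta^{ab})[X_a,[X_b,\Phi]]$ (via \eqref{HXX-operator} and \eqref{laplace-operator}) and then cancelling several terms using \eqref{4D-id-2}; your description does not carry this out. The paper again bypasses all of this with the bilinear trick: $\Tr\big(\Phi_2[X_a,(H^{ab}-\tfrac12 H\eta^{ab})[X_b,\Phi_1]]\big) = -\Tr\big((H^{ab}-\tfrac12 H\eta^{ab})[X_a,\Phi_2][X_b,\Phi_1]\big)$, which is $\int\!\sqrt{g}\,e^\sigma g^{\mu\nu}\del_\mu\phi_2\del_\nu\phi_1$ by \eqref{HXpXp-4d}, and one integration by parts finishes the job.
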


\begin{proof}
Relations \eq{metric-G} and \eq{laplace-operator} are by now well-known \cite{Steinacker:2008ya}, 
and \eq{HXpXp} can be computed straightforwardly as 
\bea
H^{ab}[X_a,\Phi][X_b,\Psi] 
&\sim& e^\sigma G^{\mu\nu} \del_\mu x^a \del_\nu x^b 
\theta^{\a\b}\del_\a x_a \del_\b \phi \theta^{\a'\b'}\del_{\a'} x_b \del_{\b'} \psi \nn\\ 
&=& e^{2\sigma}(GgG)^{\mu\nu}\del_\mu\phi\,\del_\nu\psi 
\,. 
\eea
Now \eq{HXX-operator} can be shown either by a direct computation which is given in \appref{app:HXX-op}, or 
more elegantly by considering the following bilinear form
\begin{align}
\Tr \(\Phi_1 H^{ab}[X_a,[X_b,\Phi_2]]\) &=
\Tr \(- [X_a,H^{ab}][X_b,\Phi_2]\Phi_1 - H^{ab}[X_b,\Phi_2] [X_a,\Phi_1]\) 
\label{HXX-bili}
\end{align}
for any matrices $\Phi_i\sim\phi_i(x)$. 
The first term vanishes for self-dual $\theta$
(up to $\cO(h^2)$, resp. is easy to evaluate), and reads
\begin{align}
& \Tr \big([X_a,H^{ab}][X_b,\Phi_2]\Phi_1 \big)
\sim -\intpiG e^{-\sigma}\phi_1(e^{\sigma}\Lap x^c \del_\a x_c + \del_\a\eta) 
\theta^{\a\b}\del_\b x^b \theta^{\mu\nu}\del_\mu x_b\del_\nu\phi_2 \nn\\
&= \intpiG e^\s \phi_1 \(\nabla^\b g_{\b\a} - \frac 14 \del_\a(gG)+\frac 14\del_\a\sigma (gG)\) G^{\a\nu}\del_\nu\phi_2 
\,,
\end{align}
using \eq{X-H-semiclass} and \eq{del-g-id-1}. 
The second term of \eqref{HXX-bili} can be computed using \eq{HXpXp} yielding
\begin{align}
&\Tr \big( H^{ab}[X_b,\Phi_2] [X_a,\Phi_1] \big)
\sim \intpiG e^\s (GgG)^{\mu\nu} \del_\mu \phi_2 \del_\nu \phi_1 \nn\\
&= -\intpiG \phi_1\(e^{\s}\nabla_\nu\sigma (GgG)^{\mu\nu}\del_\mu \phi_2
+ e^{\sigma} \nabla^\rho g_{\rho\eta}G^{\mu\eta}\del_\mu \phi_2
 + e^{\sigma} (GgG)^{\mu\nu}\nabla_\nu \del_\mu \phi_2\) 
\,. 
\end{align}
Hence 
\bea
\Tr \big(\Phi_1 H^{ab}[X_a,[X_b,\Phi_2]]\big) &=&
\Tr \big(- [X_a,H^{ab}][X_b,\Phi_2]\Phi_1 - H^{ab}[X_b,\Phi_2] [X_a,\Phi_1]\big) \nn\\
&\sim& \intpiG e^\sigma \phi_1  
\Big(\inv{4}\left(\del_\a(gG) -\del_\a\sigma (gG)\right)G^{\a\nu}\del_\nu\phi_2 
\nn\\
&& \quad
+\nabla_\nu \sigma(GgG)^{\mu\nu}\del_\mu \phi_2
 + (GgG)^{\mu\nu}\nabla_\nu \del_\mu \phi_2\Big)
\,,
\eea
which implies \eqref{HXX-operator} since $\phi_1$ is arbitrary. 
Further simplification of this formula can be achieved in $2n=4$ dimensions, where 
\eqref{HXpXp-4d} follows directly from \eqref{HXpXp} using the 4D identity \eqref{4D-id-2}.
Hence in particular
\begin{align}
(2\pi)^2 \Tr \big(\Phi_2[X_a,(H^{ab} - \frac 12 H \eta^{ab}) [X_b,\Phi_1] ]\big)
&=- (2\pi)^2 \Tr \big((H^{ab} - \frac 12 H \eta^{ab}) [X_a,\Phi_2] [X_b,\Phi_1]\big)\nn\\
&\sim \intg e^\s g^{\mu\nu} \del_\mu \phi_2 \del_\nu \phi_1 \nn\\
&= -\intg  g^{\mu\nu} \phi_2 \nabg_\mu (e^\s \del_\nu \phi_1) 
\,, \label{TXXPP}
\end{align}
which for arbitrary $\phi_2$ implies \eqref{HXX-operator-4d}.
\end{proof}

Finally, we also note the following identity which will be useful below:
\begin{align}
H^{ab}[X_a,[X_b,\Phi]] &= [X_a,H^{ab}[X_b,\Phi]] - [X_a,H^{ab}][X_b,\Phi] \nn\\
&\sim  e^\s  \left(2\eta \Lap \phi - e^\s g^{\mu\nu}\nabG_\mu\del_\nu \phi
+2G^{\mu\nu} \del_\nu \eta  \del_\mu \phi  - \nabG_\nu (e^\s g^{\mu\nu}) \del_\mu \phi\right) \nn\\
&\quad - e^\s(e^\s\Lap x^c \del_\a x_c + \del_\a\eta) G^{\a\nu}\del_\nu\phi \nn\\
&=  e^{2\s}  \left(2e^{-\s}\eta \Lap \phi -g^{\mu\nu}\nabG_\nu\del_\mu\phi
+ (e^{-\s}G^{\mu\nu} \del_\nu \eta - g^{\mu\nu}\nabG_\nu \s)  \del_\mu \phi\right) 
\,. \label{XH-HXP}
\end{align}

\subsection{Potential terms \texorpdfstring{$k=0$}{k=0}}
\label{sec:potentials}

For $k=0$, consider first the following terms
\bea
\Tr \Big(\!-\inv{4}H\Big)^\ell \sim \intpiG e^{-\s}\eta^\ell \,, \qquad \textrm{for }\ell\in\N \,.
\eea
For $\ell=1$, we recover the basic Yang-Mills matrix model
\begin{align}
S_{YM} &= -\inv{4}\,\Tr H  \,\sim\, \intpiG\, e^{-\sigma}\eta  
\,.
\end{align}
Now recall that \eq{eta-sigma-alpha}
\be
e^{-\s}\eta= \inv2(\a^2 + \a^{-2}) \geq 1 \,,
\ee
which assumes its minimum $e^{-\sigma}\eta=1$ if and only if
$\a=\pm 1$, i.e. for $g_{\mu\nu}=G_{\mu\nu}$. 
This means that for fixed embedding, the minimum of the action $S_{YM}$ 
is achieved\footnote{This 
is certainly true in the Euclidean case, and in the Minkowski case provided 
we adopt complexified $\theta^{\mu\nu}$ as discussed in \secref{sec:4d-relations} 
and Ref.~\cite{Steinacker:2008ya}.} if $\a=\pm 1$, i.e.
if $\theta^{\mu\nu}$ is self-dual w.r.t. $g_{\mu\nu}$. 
Curvature terms as discussed below
may lead to small deviations from self-duality,
\be
G_{\m\n} = g_{\m\n} + h_{\m\n} \,,
\ee
however the potential is expected to dominate as long as the curvature is ``small''.
This is an important mechanism, which justifies to focus on 
geometries where $G_{\mu\nu} \approx g_{\mu\nu}$. The deviations from (anti-)self-duality
will be studied in more detail in \secref{sec:eom}; e.g. it will also be shown that $e^{-\s}\eta=1+\cO(h^2)$.

Thus assuming $G \approx g$, 
the above potential terms for $\ell>1$ amount to
\bea
\Tr \Big(\!-\inv{4}H\Big)^\ell \sim
 \intpiG e^{(\ell-1)\s}\, (e^{-\s}\eta)^\ell
\,\,\,\,\stackrel{g\approx G}{\approx}\,\,\,\, \intpiG e^{(\ell-1)\s}\, .
\eea
Then these terms essentially
determine a potential
\be
S_{\rm pot}   = \sum_\ell a_\ell \Tr H^\ell \,\,\,\stackrel{g\approx G}{\approx}\,\,\, \intpiG V(\s) \,,
\label{sigma-potential}
\ee
for $e^\s$.
This is very interesting: if $V(\sigma)$ has a non-trivial minimum, it will
dynamically determine the vacuum expectation value of $e^{\sigma}$ and 
hence the scale of non-commutativity.
Thus $e^\sigma$ will be essentially constant, simplifying considerably
some of the considerations below. 
This is also important in order to preserve the equivalence principle, at least approximately, 
because the effective metric for fermions and scalars a priori differ 
by a conformal factor $\sim e^{\sigma/3}$ \cite{Klammer:2008df,Klammer:2009dj}.
There are other terms with $k=0$ of  type $\Tr(H^{ab}H_{bc} H^{ca})$ etc.
For $g \approx G$, they essentially reduce to the same potential terms 
as above due to the projector
property $4 H^{ab}\eta_{bb'}H^{b'c} = H H^{ac}$ which holds for  $g_{\m\n}=G_{\m\n}$, assuming $2n=4$. 
However this type of terms 
also depends on the dimension of $\cM \subset \R^D$, and might help to single out 
4-dimensional branes. This should be investigated elsewhere. 
(In fact, $g_{\m\n}=G_{\m\n}$ is only possible for $2n=4$, which alone would single out 
4-dimensional branes.)

We can summarize these observations as follows: 
In the case of near-flat geometries the potential terms with $k=0$ are expected to dominate,
leading to $g_{\m\n} \approx G_{\m\n}$ and $e^\sigma \approx$ const.
Additional terms with $k>0$ involving more commutators typically correspond 
to curvature contributions as shown below, and
may lead to small deviations from $g=G$. In fact, 
it turns out that $\sigma = $ const. is incompatible with self-dual $\theta^{\mu\nu}$ 
resp. $g = G$ 
for general geometries\footnote{For example, such a self-dual $\theta^{\mu\nu}$
was determined for the Schwarzschild geometry in \cite{Blaschke:2010ye}, and it
turns out that $e^\sigma \neq {\rm const}$.}.
Nevertheless, the presence of a potential $V(\sigma)$ 
should ensure that $\sigma$ is constant to a very good
approximation, even in the presence of curvature. This is important 
because $e^\sigma$ determines e.g. the gauge coupling constant. It also suggests that 
the symplectic structure 
obtained in \cite{Blaschke:2010ye} based on self-duality will be modified
near the horizon, such that $e^\sigma \approx$ const. is preserved.
This should be studied in more detail elsewhere.

\subsection{\texorpdfstring{$\cO(X^{6})$}{O(X**6)} terms}

For the sake of systematics we start our discussion of  $k>0$ terms with the $\cO(X^{6})$, although the 
$\cO(X^{10})$ turn out to be much more appealing.
As shown in \cite{Blaschke:2010rg}, there are only two independent terms of order $X^6$, given by 
\begin{align}
S_6&= \Tr \left(\a \Box X^a \Box X_a + \frac{\b}{2} [X^c,[X^a,X^b]] [X_c,[X_a,X_b]]\right)
\,. \label{S6-geom}
\end{align}
In the general case $g_{\m\n} \neq G_{\m\n}$, it seems that the easiest way to evaluate them is in terms of 
$R[g]$ (also allowing us to compare with the one-loop results in \cite{Klammer:2009dj}).  
We start our derivation by considering
\begin{align}
\Box X^a&\sim -\th^{\m\n}\pa_\m x^b\nabg_\n\left(\th^{\a\b}\pa_\a x_b\pa_\b x^a\right)=-\left(e^\s G^{\n\b}\nabg_\n\pa_\b x^a+e^\s\th^{-1}_{\a\r}G^{\n\r}\nabg_\n\th^{\a\b}\pa_\b x^a\right) \nn\\
& = -e^\s \left(G^{\m\n}\nabg_\m\nabg_\n x^a-G^{\m\r}\nabg_\n x^a\th^{\s\n}\nabg_\m\th^{-1}_{\s\r}\right)\,,
\end{align}
since $\pa_\m x^b\nabg_\n\pa_\a x_b=0$. It follows that
\begin{align}
\Box X^a\Box X_a&\sim e^{2\s}\left(G^{\m\n}G^{\a\b}\nabg_\m\nabg_\n x^a\nabg_\a\nabg_\b x_a+e^\s G^{\m\r}G^{\a\t}G^{\s\e}\nabg_\m\th^{-1}_{\s\r}\nabg_\a\th^{-1}_{\e\t}\right)\nn\\
&=e^{\s}\left(e^\s G^{\m\n}G^{\a\b}\nabg_\m\nabg_\n x^a\nabg_\a\nabg_\b x_a+g_{\r\t}\nabg_\m G^{\m\r}\nabg_\a G^{\a\t}\right)
\,,
\end{align}
using \eqref{div-theta-g-id} for the second term (which is manifestly order $\cO(h^2)$). 
The second part of $S_6$ is derived in \appref{app:S6},
and using the 4D identity \eqref{4D-id-2} we find
\begin{align}
S_6&\sim (\a+\b)\intpig e^\s G^{\m\n}G^{\r\s}\nabg_\m\nabg_\n x^a\nabg_\r\nabg_\s x_a \nonumber\\
&\quad+\b\intpig e^\s\Bigg[2e^{-\s}\th^{\m\n}\hat\th^{\r\s}R[g]_{\m\r\n\s}- G^{\m\n}G^{\r\s}R[g]_{\m\r\n\s}-(GgG)^{\m\n}R[g]_{\m\n}\nn\\
&\quad\hspace*{3.5cm} +\left(\frac{3}{4}(Gg)G^{\m\n}-g^{\m\n}\right)\pa_\m\s\pa_\n\s\Bigg]\nonumber\\
&\quad+\a\intpig \nabg_\m G^{\m\r}\nabg_\a(Gg)^\a_\r \nn\\
&\quad+\b\intpig e^\s\Bigg[ 
\inv{2}\left(\frac{3}{2}G^{\m\n}\nabg_\m(Gg)-\nabg_\m G^{\m\n}(Gg)\right)\pa_\n\s 
+g^{\n\r}\nabg_\n\th^{-1}_{\t\r}\th^{\t\m}\pa_\m\s\nn\\* 
&\quad\hspace*{3.5cm} -\nabg_\m(Gg)^\n_\s G^{\r\s}\nabg_\n\th^{-1}_{\t\r}\th^{\t\m}
-\nabg_\m (Gg)^{\r}_\s \theta^{-1}_{\r\t}\nabg_\n G^{\t\m} \theta^{\n\s} \nn\\
&  \quad\hspace*{3.5cm} 
   + \frac 12 e^\s G^{\t\m}\nabg_\m G^{\r\r'}\theta^{-1}_{\n\r'} \th^{-1}_{\s\r} \nabg_\t G^{\n\s} \Bigg]
\label{eq:S6-semiclass}
\end{align}
where
\be
\hat \theta^{\mu\nu} := G^{\mu\mu'} g_{\mu'\eta} \theta^{\eta\nu}
\label{eq:def_theta-hat}
\ee 
is an anti-symmetric tensor. This is manifestly tensorial for $\a=-\b$.
Using \eqnref{eq:app-rel-lap-Lap-x} the first line of $S_6$ in the semi-classical limit \eqnref{eq:S6-semiclass} can also be written as
\begin{align}
\frac{\a+\b}{(2\pi)^2}\intg e^\s \mathcal{P}^{ab}_N\Lap x_a\Lap x_b\,.
\end{align}
The action \eqref{eq:S6-semiclass} simplifies considerably
in the self-dual case $g_{\m\n}=G_{\m\n}$, reducing to the one previously computed in Ref.~\cite{Blaschke:2010rg}. 
Furthermore, the terms surviving that limit are of the same type as those induced at one loop 
when coupling fermions to the matrix model, as was found in Ref.~\cite{Klammer:2009dj}. 
The leading order deviations from the self-dual case may be studied 
by expanding the above action around $G_{\m\n}=g_{\m\n}+h_{\m\n}$: 
To order $\cO(h)$ the action $S_6$ semi-classically reads
\begin{align}
S_6&\sim \frac{\a+\b}{(2\pi)^2}\intg e^\s \left(\lap x^a-2h^{\m\n}\nabg_\m\nabg_\n x^a\right)\lap x_a \nn\\
&\quad+\frac{\b}{(2\pi)^2}\intg e^\s\Bigg[2e^{-\s}\th^{\m\n}\left(\th^{\r\s}-h^{\r\a}g_{\a\b}\th^{\b\s}\right)R[g]_{\m\r\n\s} -2R[g]+4h^{\m\n}R[g]_{\m\n}\nn\\
&\quad\hspace*{3.5cm} +2\pa_\m\s\pa^\m\s -3h^{\m\n}\pa_\m\s\pa_\n\s
+2\nabg_\m h^{\m\n}\pa_\n\s-\frac{3}{4}\pa^\n(hg)\pa_\n\s\nn\\*
&\quad\hspace*{3.5cm} +g^{\n\r}\nabg_\n\th^{-1}_{\t\r}\th^{\t\m}\pa_\m\s
+\nabg_\m h^{\n\r}\nabg_\n\th^{-1}_{\t\r}\th^{\t\m}\Bigg]\nn\\*
&\quad +\cO(h^2)
\,. \label{eq:S6-order-h}
\end{align}
As explained in \secref{sec:eom}, $h_{\mu\nu}$ can be parametrized in terms of the 
deviation of the  symplectic structure around its self-dual version, 
i.e. $\th^{-1}_{\m\n}=\bar\th^{-1}_{\m\n}+F_{\m\n}$ where 
$\bar\th^{-1}_{\m\n}$ is self-dual with respect to $g_{\m\n}$. 
Then the above action can be simplified further by considering terms 
only up to order $\cO(F)$. This implies that $(h g) = \cO(F^2)$ can be dropped, and
$g^{\n\r}\nabg_\n\th^{-1}_{\t\r} = g^{\n\r}\nabg_\n F_{\t\r} = \cO(\del h)$.
The same type of matrix model terms have also been considered
on 2-dimensional branes in \cite{Arnlind:2010kw}, 
where $S_6$ for $\a+\b=0$ reduces essentially to an
integral over the Ricci scalar.

We also note that
\bea
R_{\mu\nu\rho\s}[g] \theta^{\mu\nu} \theta^{\rho\s} 
&=& W_{\mu\nu\rho\s}[g] \theta^{\mu\nu} \theta^{\rho\s} 
 - 2 e^{\s} R_{\mu\rho}[g] G^{\mu\rho} 
+\inv{3}e^\s R[g] (Gg) 
\eea
where~\cite{Wald:1984} 
\begin{align}
W_{\mu\nu\rho\s}&:=R_{\m\n\r\s}-\inv{2}(g_{\m\r}R_{\s\n}-g_{\m\s}R_{\r\n}-g_{\n\r}R_{\s\m}+g_{\n\s}R_{\r\m})+\inv{6}(Rg_{\m\r}g_{\s\n}-Rg_{\m\s}g_{\r\n})
\end{align}
is the Weyl tensor on the $4$ dimensional submanifold $\cM^4$. 
In the case of (anti-)self-dual $\theta$ we have $g=G$, and
\be
 R_{\mu\nu\rho\s}[g] \theta^{\mu\nu} \theta^{\rho\s} = W_{\mu\nu\rho\s}[g] \theta^{\mu\nu} \theta^{\rho\s} - \frac{2}{3} e^{\s} R[g] 
\,.
\label{Wthth}
\ee
This is interesting for the following reason: As discussed below, it may be appropriate to average over the
moduli space of Poisson structures $\theta^{\mu\nu}$, which essentially consists of (anti-) self-dual
2-forms with fixed determinant. This averaging over the asymptotic orientations leads to 
$W_{\mu\nu\rho\s}[g] \langle\theta^{\mu\nu} \theta^{\rho\s}\rangle = 0$
since $\langle\theta^{\mu\nu} \theta^{\rho\s}\rangle$ is Lorentz-invariant
for (A)SD $\theta$, so that the term 
$R_{\mu\nu\rho\s}[g] \theta^{\mu\nu} \theta^{\rho\s}$ essentially reduces to the Ricci scalar.

\subsection{\texorpdfstring{$\cO(X^{10})$}{O(X**10)} terms}

We now consider $\cO(X^{10})$ terms with $k>0$ (i.e. ignoring contributions to the 
 potential as discussed above in \secref{sec:potentials}). 
We are especially interested in a combination of terms which semi-classically more or less 
leads to the Einstein-Hilbert action. For $g_{\m\n}\neq G_{\m\n}$, the answer is not 
as simple as \eqnref{E-H-action-prev} derived in Ref.~\cite{Blaschke:2010rg}. 
As a starting point, we hence consider the term $H^{ab} \Box H_{ab}$ which previously 
has been shown to be the ``central piece'' leading to the Ricci-scalar in the semi-classical limit 
(i.e. the additional matrix terms were needed to make it intrinsic). 
The corresponding derivation is given in \appref{app:S10}. It reveals that the following 
combination of terms depend only on the intrinsic geometry through
$G_{\mu\nu},\,g_{\mu\nu}$ and $e^\s$, independent of the embedding $\cM^4 \subset \R^D$: 
\begin{align}
& H^{ab} \Box H_{ab} + 2\Box X^c H^{ab}[X_a, [X_b,X_c]]  \nn\\
&\sim - e^{3\sigma} 
\Big((GgGg) e^{-\s}\Box_G e^\sigma 
 + 8 e^{-2\s}(\del_\a \eta \del^\a \eta-\eta\del_\a \eta  \del^\a \s)
  - \frac 32\nabla_\nu g^{\mu\b}\nabla^\nu g_{\mu\b}
  + 2\nabla^{\mu}g_{\mu\b}\nabla_\a g^{\a\b} \nn\\
&\quad + (Gg)R_{\mu\eta}[G] (GgG)^{\mu\eta}  - 2R[G] 
 - 2\nabla_{\mu'}(G^{\mu\mu'}g_{\mu\b} \nabla_\a g^{\a\b})
- G_{\mu\b}\nabla_\a g^{\mu\rho}\nabla_\rho g^{\a\b}  \nn\\
& + 2\nabla_\b g^{\a\b}e^{-\s} \del_\a \eta 
- 4e^{-\s}\eta\nabla_{\mu}g^{\mu\a} \del_\a \s 
+ 2g^{\mu\nu}G_{\mu\a}\nabla_\b g^{\a\b}\del_\nu \s  \Big) 
\,. \label{HBoxH-bare}
\end{align}
The second term in the first line is needed in order to cancel extrinsic terms, 
and in the self-dual limit it semi-classically coincides with its counter part of 
Ref.~\cite{Blaschke:2010rg} (resp. the first term of \eqnref{E-H-action-prev}). 

In order to make the following results more transparent, we keep only terms of order 
$\cO(h)$ and drop higher-order terms in $h$. This is justified by the observation in 
\secref{sec:potentials} that the Yang-Mills action $S_{YM}$ is quadratic in $h$, 
and therefore suppresses the deviation 
from self-duality. Then the above result yields
\begin{align}
& \Tr(H^{ab} \Box H_{ab} + 2\Box X^c H^{ab} [X_a,[X_b,X_c]] ) \nn\\
& \quad\sim\,\, - \intpiG  e^{2\sigma} 
\Big( 4R_{\mu\eta}[G] (GgG)^{\mu\eta}  - 2R[G] 
+4 e^{-\s}\Box_G e^\sigma 
 + 4\nabla_\b g^{\a\b} \del_\a \s  \Big) +\cO(\del h^2) 
\,. \label{X10-curvature-action-1}
\end{align}
Using the intrinsic terms \eq{term-10a}, \eq{term-10c}, we also obtain the following 
forms
{\allowdisplaybreaks
\begin{subequations}\label{eq:S-grav}
\begin{align}
S_R&:=  \Tr(\inv2 T^{ab} \Box H_{ab} + \Box X^c H^{ab} [X_a,[X_b,X_c]] ) \nn\\*
&\sim - \intpiG  e^{2\sigma} 
\Big(3R[G] - 2R_{\mu\eta}[G] g^{\mu\eta}
 + 2\nabla_\b g^{\a\b} \del_\a \s  \Big) \quad + \cO(\del h^2)
\,, \label{S-grav-1}\\
\tilde S_R&:= \Tr(\inv2 H^{ab} (\Box H_{ab} - [X^a,[X^b,H]]) + \Box X^c H^{ab} [X_a,[X_b,X_c]]  ) \nn\\*
&\sim - \intpiG  e^{2\sigma} 
\Big(3R[G] - 2R_{\mu\eta}[G] g^{\mu\eta} \Big)  \quad + \cO(\del h^2)
\,, \label{S-grav-2}
\end{align}
\end{subequations}
noting that $\eta = e^\s + \cO(h^2)$ as well as
}
\bea
2R_{\mu\eta}[G] (GgG)^{\mu\eta}  - R[G] 
&=&  (4\e^{-\s}\eta -1)R[G] - 2R_{\mu\eta}[G] g^{\mu\eta} \nn\\
&=& 3R[G] - 2R_{\mu\eta}[G] g^{\mu\eta} \,\, + \cO(h^2 p^2) 
\,. \label{R-R-expand}
\eea
Here $p^2$ stands for the curvature scale of the gravitational field
 $R[G]$, and we will assume that $\cO(h^2 p^2) = \cO(\del h^2)$. 
For $G=g$, we recover the result obtained in \cite{Blaschke:2010rg}, 
and the ``local'' formula \eq{E-H-action-prev} follows from \eq{HBoxH-bare}.

\paragraph{Additional $\cO(X^{10})$ terms.}

Consider the following terms, whose
semi-classical limit is obtained easily from our previous results \eqref{eq:def-H}, \eqref{X-H-semiclass} and \eqref{dXboxX-4D}:
{\allowdisplaybreaks
\begin{subequations}
\bea
 [X^a,H_{ab}] [X^b,H] 
&\sim& 4 e^\s
(e^{\sigma}\nabla_\a g^{\b\a} \del_\b\eta  - G^{\a\b}\del_\a\eta  \del_\b\eta ) \nn\\*
&=& 4 e^{3\s}
(\nabla_\a h^{\b\a} \del_\b\s  - G^{\a\b}\del_\a\s  \del_\b\s )  + \cO(h^2)
\,, \label{term-10a}\\
\co{X^a}{H}\co{X_a}{H} &\sim& \, -16 e^\s \, G^{\mu\nu}\del_\mu \eta \del_\nu \eta  \nn\\*
&=& \, -16 e^{3\s} \, G^{\mu\nu}\del_\mu \s \del_\nu \s  + \cO(h^2) 
\,. \label{term-10c}
\eea
\end{subequations}
There are additional $\cO(X^{10})$ terms which are of order $\cO(h^2)$, 
which we will not discuss in this paper. These include
\begin{subequations}
\bea
H^{ab} \Box X^a \Box X^b 
&=& - e^{3\sigma} G_{\mu\nu}\nabla_\a g^{\a\mu}\nabla_\b g^{\b\nu}
\,\, = \,\, \cO(\del h^2)
\,, \label{term-10d}\\[1ex]
 \co{X_a}{H^{ab}} \co{X^c}{H_{bc}}
&\sim& - e^{2\s} 
\(\Box_G x^a \del_\a x_a + e^{-\s} \del_\a \eta\) \theta^{\a\b}\del_\b x^b
\(\Box_G x^c \del_\d x_c + e^{-\s}\del_\d \eta\) \theta^{\d\g}\del_\g x_b \nn\\*
&=& e^{3\s} 
\(-G_{\a\b}\nabla_\g g^{\a\g} \nabla_\k g^{\b\k}
 + 2  e^{-\s}\nabla_\a g^{\b\a}\del_\b \eta  
- e^{-2\s} G^{\a\b}\del_\a \eta \del_\b \eta \) \nn\\*
&=& \cO(\del h^2)  
\,. \label{term-10e}
\eea
\end{subequations}
The trace of the last term can in fact be written in a number of different ways,
}
\bea
\Tr \big( [X^a,H^{ab}] [X^c,H^{bc}] \big)
&=& - \Tr \big( H^{ab}[X^a,[X^c,H^{bc}]] \big) \nn\\
&=& \Tr \big( H^{ab}[X^c,[H^{bc},X^a]] \big) + \Tr \big( H^{ab}[H^{bc},[X^a,X^c]]] \big) \nn\\
&=& \Tr \big( [X^c,H^{ab}][X^a,H^{bc}] \big) + \Tr \big( [H^{ab},H^{bc}][X^a,X^c] \big) 
\,.
\eea

\paragraph{Extrinsic terms.}

The $\cO(X^{10})$ terms above have been tailored to be tensorial,
i.e. such that they only depend on the intrinsic geometry of $\cM$
in the semi-classical limit. 
There are of course also terms which depend on the ``extrinsic curvature''
i.e. on the embedding of $\cM \subset \R^D$. The prototype of such a term is
given by 
\bea
 \Box X^a \Box X_a
\,\,\sim\,\, e^{2\s} \Box_G x^a \Box_G x_a 
\,, \label{extrinsic-1}
\eea
cf. \eq{S6-geom},
or similarly
\bea
\Tr \Box X_a [X^b,(H_{bc} - \frac 12 \eta_{bc} H) [X^c,X_a]]
&\sim& \intpig e^\s \Box_G x^a (\Box_g x_a + g^{\mu\nu}\del_\mu\s \del_\nu x_a) \nn\\
&=& \intpig e^\s \big(\Lap x^a \lap x_a 
- g^{\m\n} G_{\n\a}  \nabG_\b g^{\a\b} \pa_\m\s \big)  \,, \nn\\
 H \Box X^a \Box X^b &\sim& -4 e^{2\s} \eta \Box_G x^a \Box_G x_a
\,. \label{ext-term-10}  
\eea
For $g_{\m\n} \sim G_{\m\n}$, these terms essentially coincide, and  
single out harmonic embeddings $\Box_G x^a = 0$ as vacuum geometries. 
In general, such terms should be expected to arise upon quantization, and 
their physical significance must be investigated. It seems plausible that they 
become important at cosmological scales where the intrinsic curvature is small, leading to 
long-distance modifications of gravity somewhat along the lines of 
the ``harmonic'' solutions given in \cite{Klammer:2009ku,Steinacker:2009mp}. 
Such long-distance modifications are 
very interesting in view of the major puzzles in cosmology, notably in the 
context of dark energy and dark matter.

On the other hand, the term $\Box_G x^a \Box_G x_a$ might also serve as a UV cutoff for
perturbation theory, since it behaves as $(p^2)^2$ on $\R^4_\theta$, 
where $p$ denotes the momentum scale.

\section{Gravitational action and degrees of freedom}
\label{sec:eom}

Now consider the matrix model action combining \eq{S-YM} with 
curvature terms such as \eq{eq:S-grav}, 
which in the semi-classical limit become
{\allowdisplaybreaks
\begin{subequations}\label{E-H-action-2}
\begin{align}
\tilde S_R 
&\sim   - \intpiG e^{2\sigma} 
(3R[G] - 2R_{\mu\eta}[G] g^{\mu\eta})  + \cO(\del h^2)  \nn\\*
&= -\intpig e^{2\sigma} 
(R[g] - 3 R^{\mu\nu}[g] h_{\mu\nu} + \nabg^\nu\nabg^\mu h_{\mu\nu})  + \cO(\del h^2)  
\,,  \\
S_{\textrm{simple}} &=
\tilde  S_R + \inv{2}\Tr[X^a,T_{ab}][X^b,H] \nn\\* 
&\sim  -\intpig e^{2\sigma} 
(R[g] - 3 R^{\mu\nu}[g] h_{\mu\nu} ) + \cO(\del h^2)  
\,, 
\end{align}
\end{subequations}
using \eq{RG-Rg-relation-h} where
}
\be
 G_{\mu\nu} = g_{\mu\nu} + h_{\mu\nu}
\ee 
and therefore $G^{\mu\nu} = g^{\mu\nu} - h^{\mu\nu} + \cO(h^2)$. 
The term $\nabg^\nu\nabg^\mu h_{\mu\nu}$ can be eliminated by subtracting 
suitable terms of type \eq{term-10a}, \eq{term-10c} from the action.  
We will therefore drop it and consider $S_{\textrm{simple}}$ 
in order to simplify the presentation. For the same reason the 
possible additional contributions from  $S_6$ \eq{eq:S6-order-h}
will also be omitted here.
We will furthermore  drop all terms of order $\cO(\del h^2)$, however 
we keep the $\cO(h^2)$ e.g. in the Yang-Mills terms and the potential terms, 
which are expected to be important for weak gravity. 
This will be justified below, and ensures a well-defined and compact moduli space
of vacuum solutions for $\theta^{\mu\nu}$.

Because these actions are tensorial 
(i.e. independent of the embedding $\cM^4 \subset \R^D$),
the semi-classical equations of motion are obtained simply 
by varying the independent geometrical degrees of freedom
encoded in $g_{\mu\nu}$ and $\theta^{\mu\nu}$.
To understand these degrees of freedom, 
note that in a given ``coordinate patch'', the  embedding metric 
$g_{\mu\nu} = \eta_{\mu\nu} + \del_\mu\phi^i\del_\nu\phi^j\eta_{ij}$ is determined by 
the scalar fields $\phi^i(x)$.
The Poisson tensor $\theta^{\mu\nu}$ can be parametrized as
\bea
\theta^{-1}_{\mu\nu} &=& \bar\theta^{-1}_{\mu\nu} + F_{\mu\nu}  
\eea
where $\bar\theta^{-1}_{\mu\nu}$ is self-dual\footnote{One could equally well consider
the case of small perturbations around anti-self-dual $\bar\theta^{-1}_{\mu\nu}$.} 
with respect to $g_{\mu\nu}$, and $F_{\mu\nu}= \del_\mu A_\nu - \del_\nu A_\mu$. 
Thus the independent degrees of freedom  are given by the embedding $\phi^i$ 
and $F_{\mu\nu}$ resp. $A_\mu$.

In principle, one could now 
derive the equations of motion resulting from \eq{E-H-action-2} as well as 
from the other possible terms such as $S_6$, \eqnref{eq:S6-order-h}. 
This is straightforward as long as only ``intrinsic'' terms 
are considered, which depend on $g_{\mu\nu}$ and $\theta^{\mu\nu}$.
The variation of the fundamental degrees of freedom can be separated into 
variations $\d_\phi$ of the embedding leading to
\be
\d_\phi g_{\mu\nu}=\d\phi^i \phi^j \eta_{ij} + \phi^i \d\phi^j \eta_{ij}
\,,
\ee
and the variation $\d_A$ of the Poisson tensor given by
\be
\d_A F_{\mu\nu} = \del_\mu \d A_\nu - \del_\nu \d A_\mu 
\,. 
\ee
We postpone this straightforward but tedious task to future work, 
and only draw some generic and qualitative conclusions below.
In the presence of terms which also depend on the embedding resp.
extrinsic curvature such as $\Lap x^a \Lap x^a$, the action would lead to 
higher-order equations of motion in the embedding $\phi^i$. In particular, this leads
to the ``harmonic branch'' as discussed in \cite{Steinacker:2009mp}, whose physical relevance requires 
further study. It may suffice here to say that such extrinsic terms may lead to 
very interesting cosmological solutions \cite{Klammer:2009ku}, while the viability for 
solar system gravity is not clear.

\paragraph{Yang-Mills action and vacuum configurations for $\theta^{\mu\nu}$.}

We can gain some important insights even without deriving equations of motion.
Let us expand the Yang-Mills term  to $\cO(F^2)$, but keep only 
$\cO(\del F)$ resp. $\cO(\del h)$ in the curvature terms
due to the explicit gravitational momentum scale. 
This gives
{\allowdisplaybreaks
\begin{subequations}\label{eq:vars-expand-in-F}
\begin{align}
\theta^{\mu\nu} &= ((1+\bar\theta F)^{-1} \bar\theta)^{\mu\nu} \nn\\*
&= (\bar\theta - \bar\theta F \bar\theta
 +  \bar\theta F \bar\theta F \bar\theta)^{\mu\nu} \,\, + \cO(F^3)\nn\\*
&= \bar\theta^{\mu\nu} + \bar\theta^{\mu\mu'}\bar\theta^{\nu\nu'} F_{\mu'\nu'} 
 +(\bar\theta F \bar\theta F \bar\theta)^{\mu\nu} \,\, + \cO(F^3) 
\,, \\
h_{\mu\nu} &= G_{\mu\nu} - g_{\mu\nu} 
= -e^{\bar\s} (\bar\theta^{-1} g F)_{\mu\nu}
 - e^{\bar\s} (F g \bar\theta^{-1})_{\mu\nu}
 - \frac 12 g_{\mu\nu}(\bar\theta F)  + \cO(F^2)
\,, \\
 g_{\mu\nu} h^{\mu\nu} &= 0 \, + \cO(F^2) \label{gh-contract} 
\,, \\
 h_{\mu\nu} h^{\mu\nu} 
&= 2 (\bar\theta F \bar\theta F) - 2 e^{\bar\s} (F g F g)
 - (\bar\theta F) (\bar\theta F) + \cO(F^3) 
\,, \\
e^{-\s} &= e^{-\bar\sigma}\det(1+\bar\theta F)^{1/2} \label{sigma-F} \nn \\*
&= e^{-\bar\sigma}(1+\frac 12 (\bar\theta F) + \frac 18 (\bar\theta F) (\bar\theta F)
 - \frac 14 (\bar\theta F\bar\theta F)  + \cO(F^3) ) 
\,, \\
(\theta g \theta g) 
&= -4 e^{\bar\s} + 2e^{\bar\s} (\bar\theta F)
 + e^{2\bar\s} (g F g F) - 2e^{\bar\s}(\bar\theta F \bar\theta F) + \cO(F^3)  
\,, \\
\frac 14 (Gg) &= - \frac 14 e^{-\s} (\theta g \theta g) 
= 1  +  \frac 14(\bar\theta F\bar\theta F)
 -\frac 18 (\bar\theta F) (\bar\theta F)- \frac 14 e^{\bar\s} (g F g F)+ \cO(F^3) \nn\\*
&= 1+\frac 18 h_{\mu\nu} h^{\mu\nu}+ \cO(F^3) 
\,. \label{theta-F-separation} 
\end{align}
\end{subequations}
Here we use a condensed notation where neighbouring indices are contracted and 
$()$ denotes a trace (e.g. $\bar\theta F \equiv \bar\theta^{\mu\nu} F_{\nu\eta}$ and 
$(\bar\theta F) \equiv \bar\theta^{\mu\nu} F_{\nu\mu}$), as well as
}
\begin{align}
e^{\bar\s}\d^\mu_\nu &= -  (\bar\theta g \bar\theta g)^\mu_\nu 
\,, &&
\nabg^\mu\bar\theta^{-1}_{\mu\nu} = 0 
\,. \label{div-theta-SD} 
\end{align}
The relation \eq{gh-contract} is in fact a consequence of $|G| = |g|$ in 4 dimensions, 
\eq{div-theta-SD} holds for any self-dual $\bar\theta^{-1}_{\mu\nu}$, and 
$\bar\s$ is defined through $\bar\theta^{\mu\nu}$, so that 
$\bar\cJ^\m_{\ \n}$ defines an almost-complex structure.
We will assume $h_{\m\n}$ to be small, and accordingly 
we will drop all terms of order $\cO(\del F^2)$.

The r.h.s. of \eq{theta-F-separation} 
acquires a geometric meaning due to the relation
\bea
\frac 18(F\bar\theta) (F\bar\theta) - \frac 14 (F\bar\theta F \bar\theta) 
&=& \textrm{Pfaff}(F_{\mu\nu}) \textrm{Pfaff}(\bar\theta^{\mu\nu})
\,, \label{FF-theta}
\eea
(cf. \cite{Steinacker:2007dq}) where
\bea
\textrm{Pfaff}(\bar\th^{\m\n}) &=& \inv8 \e_{\m\n\r\eta}\bar\th^{\m\n}\bar\th^{\r\eta} 
= \inv4 \inv{\sqrt{|g|}}\,\bar\th^{\m\n} (\star_g \bar\th)^{\m'\n'} g_{\m\m'}g_{\n\n'}
 = \pm \sqrt{|\bar\th^{\m\n}|}
\,. \label{pfaff-explicit}
\eea
Note that $\textrm{Pfaff}(\bar\th)$ is positive (negative) for (anti-) self-dual $\bar\th^{\m\n}$. 
Then the Yang-Mills matrix model action \eq{S-YM} in the semi-classical limit 
becomes\footnote{It is interesting
to compare this with the action for non-Abelian field strength \cite{Steinacker:2008ya}, which has a 
somewhat similar structure. The Abelian case has also been considered by
A. Schenkel (unpublished).}
\begin{align}
S_{YM} &\sim \intpig  e^{-\sigma}\eta 
= \intpig \big(1 +\inv8 h_{\mu\nu} h^{\mu\nu}  + \cO(F^3) \big)  \nn\\
&= \intpig \Big(1 + \inv4 e^{\bar\s} F_{\mu\nu} F_{\mu'\nu'} g^{\mu\mu'} g^{\nu\nu'} 
 - \textrm{Pfaff}(F_{\mu\nu}) \textrm{Pfaff}(\bar\theta^{\mu\nu}) + \cO(F^3) \Big) \nn\\
&= \intpig \Big(1 + \inv8 e^{\bar\s} (F\mp\star_g F)_{\m\n}(F\mp\star_g F)_{\m'\n'} g^{\m\m'} g^{\n\n'} \, + \cO(F^3) \Big)
\,, \label{S-YM-expand}
\end{align}
where $\star_g$ denotes the Hodge star w.r.t. $g_{\mu\nu}$, and $\mp$ is minus for self-dual $\bar\theta^{\mu\nu}$ 
and vice versa.

Recalling that any 2-form can be decomposed into self-dual (SD) and anti-selfdual (ASD) components, 
we arrive at an important result: 
ASD fluctuations $F_{\mu\nu}$ around a SD background $\bar\theta^{-1}_{\mu\nu}$ 
give a {\em positive} contribution to $S_{YM}$  and are hence suppressed, 
consistent with \eq{eta-sigma-alpha}. On the other hand, 
the SD part of $F_{\mu\nu}$ does {\em not} contribute to $S_{YM}$ but
determines the ``dilaton field'' $e^\s$.
Conversely, SD fluctuations around an ASD background are suppressed by $S_{YM}$, while  $e^\s$
encodes ASD fluctuations.
This justifies to focus on geometries with $G_{\mu\nu} \approx g_{\mu\nu}$, and makes clear that 
it is the embedding rather than the $\theta^{-1}_{\mu\nu}$ which plays the central role 
for the emergent gravity\footnote{It is nevertheless interesting to recall that 
this subject was sparked by the observation that the
$U(1)$ ``would-be'' gauge modes acquire a geometrical meaning through $G^{\mu\nu}$, leading to $h_{\mu\nu}$ 
which do give Ricci-flat fluctuations around flat backgrounds \cite{Rivelles:2002ez,Steinacker:2007dq}. 
This gauge sector is given a central role in~\cite{Yang:2010kj}. 
The ultimate physical relevance of these $U(1)$ modes is still to be understood.}.

In particular, it follows that the moduli space of 
vacuum configurations of $S_{YM}$ (for fixed embedding) 
consists of 2 disjoint components $\bar\Sigma = \bar\Sigma^+ \cup \bar\Sigma^-$
given by the space of (A)SD symplectic structures
$\bar\theta^{-1}_{\mu\nu}$ w.r.t. $g_{\mu\nu}$, and $S_{YM}$ provides a  positive definite action 
which suppresses fluctuations away from $\bar\Sigma$. 
These sectors $\bar\Sigma^\pm$ are disconnected, and characterized by the sign of $\textrm{Pfaff}(\bar\th^{\m\n})$.
Now observe that 
 $e^{\sigma}$ defines a scalar 
function on $\bar\Sigma$ \eq{sigma-F} which measures the ``strength'' of $\theta^{\mu\nu}$,
i.e. the non-commutativity scale. 
Hence a potential $V(e^\s)$ as in \eq{sigma-potential} 
with a non-trivial minimum, 
\begin{align}
V(e^\s) &= V_0 + \inv2 M^2(e^\s -x_0)^2  + \ldots \nn\\
  &= V_0 + \inv2 M^2 \big(e^{\bar\s}- x_0 -\inv2 e^{\bar\s}(\bar\th F) \big)^2  + \ldots 
\,, \label{V-explicit}
\end{align}
where $V_0$, $M$ and $x_0$ are constants, 
will set the NC scale resp. the vacuum scale $e^{\s} \approx$ const. 
Then $\Sigma$  becomes compact,
e.g. $\Sigma^{\pm} \cong S^2$ in the near-flat case.
On the other hand,  terms in the gravitational action such as $R_{\mu\nu} h^{\mu\nu}$
may lead to small deviations from (anti-)self-duality. Moreover, 
$e^\s =$ const. may not be compatible with 
(A)SD $\theta^{-1}_{\mu\nu}$ in the presence of curvature, cf. \cite{Blaschke:2010ye}.
Then \eq{V-explicit} suggests $(\bar\theta F) \approx 2 (1-e^{-\bar\s} x_0) \neq 0$ if $M$ is large, with
$F \to 0$ as $x \to \infty$. 
Therefore the physical moduli space $\Sigma = \Sigma^+ \cup \Sigma^-$ of vacua
will consist of symplectic forms 
$\theta^{-1}_{\mu\nu} = \bar \theta^{-1}_{\mu\nu} + F_{\mu\nu}$ 
which are small deformations of (A)SD fields,
characterized (in the asymptotically flat case) by the asymptotic orientation of
$\bar \theta^{-1}_{\mu\nu}$.

If the function $V(e^\s)$ has flat directions, 
then one can pick a vacuum with arbitrary scale $e^{\bar \s}$. The kinetic term $\del^\mu\s\del_\mu\s$
would still suppress variations of $\s$. 

We conclude that the above type of action represents a well-defined variational problem 
for the geometry, and leads to metrics with $g_{\mu\nu} \approx G_{\mu\nu}$ as well 
as $e^\s \approx$ const. Note that although we focused on the case of Euclidean signature, 
the steps go through in the Minkowski case provided one adopts 
complexified $\theta^{\mu\nu}$ as discussed above, which do admit (anti-)self-dual configurations 
$\star_g \theta = \pm i\theta$. 
This provides an important simplification and progress for the analysis of the emergent gravity theory.

\paragraph{Further perspectives and physical implications.}

One obvious class of vacuum solutions of  \eq{S-grav-2} and \eq{S-YM}  
is given by Ricci-flat spaces along with
an (A)SD $\theta^{\mu\nu}$ (hence $h_{\mu\nu} = 0$) such that $e^\s =$ const. 
The problem is that in general, Ricci-flat spaces may not admit such 
(A)SD $\theta^{\mu\nu}$ such that $e^\s =$ const. This is illustrated in 
\cite{Blaschke:2010ye} where a  self-dual $\bar\theta^{-1}_{\mu\nu}$
was found with $e^\s \neq$ const.

The above analysis suggests the following strategy to find  solutions
for the coupled system $(g_{\mu\nu},\theta^{\mu\nu})$:
for a given metric $g_{\mu\nu}$, compute first a self-dual symplectic form $\bar\theta^{-1}_{\mu\nu}$;
this will  lead to some $e^{\bar\s}$ which in general is not constant. 
Then $F_{\mu\nu}$ resp. $h_{\mu\nu}$ should be determined through the full 
equations of motion, which
will take the form of modified inhomogeneous Maxwell equations, schematically
\be
g^{\mu\mu'} \nabla^{\nu}(e^{\bar\s}F_{\mu'\nu})      
 =  J^\mu .
\ee
Here $J^\mu$ will depend on $\del_\nu \tilde V(e^{\bar\s})$ and $(\bar\theta F)$,
and may include matter contributions which turn out to act as dipole 
sources \cite{Steinacker:2010rh}.
In the presence of a suitable potential
$V(e^\s)$ and/or a kinetic term $\del^\mu\s\del_\mu\s$, this will lead to
$e^{\s} \approx$ const. 
Since the gauge coupling as well as the NC scale depends on $e^\sigma$, this
is probably essential to meet precision tests of general relativity 
and the time-independence of the fine structure constant.

The example of the Schwarzschild geometry \cite{Blaschke:2010ye} indicates a
certain tension between the requirements
$e^\s =$ const. and $g_{\mu\nu} = G_{\mu\nu}$, since $\theta^{\mu\nu}$ 
is determined by solving Maxwell-like equations with non-trivial boundary conditions.
This would presumably be acceptable if
$h_{\mu\nu} = \cO(R)$
for asymptotically flat 4-dimensional geometry, where $R$ denotes the scale of the 
gravitational curvature. 
In that case, the additional terms in 
the gravitational action such as $h^{\mu\nu} R_{\mu\nu} = \cO(R^2)$ are suppressed at least
in the case of weak gravity, leading to nearly-Ricci-flat spaces
$R_{\mu\nu} \approx 0$ as (vacuum) solutions
in agreement with  general relativity. However, this has not been shown at this point.

Even if the equations governing $\theta^{\mu\nu}$ are so rigid that
$h_{\mu\nu}$ cannot be neglected, one might still effectively recover an (almost)-constant
$e^\s$ along with (almost)-ASD $\theta^{-1}_{\mu\nu}$ e.g. by 
considering branes with compact extra dimensions, such as $\cM^4 \times S^2 \subset \R^{10}$. 
This is very natural also to obtain non-Abelian gauge groups as required
for particle physics (cf. \cite{Chatzistavrakidis:2010xi}), 
and will be studied elsewhere in more detail. 

There is another interesting point which should be kept in mind.
Once a solution for $\theta^{\mu\nu}$  is found, 
the quantization of the theory requires to 
integrate over the fluctuations in $F_{\mu\nu}$ (recall that this 
would-be $U(1)$ gauge field couples only to the gravitational sector).
However, there is in fact a moduli space $\Sigma$ of solutions $\theta^{\mu\nu}$, 
corresponding to different asymptotic orientations of $\theta^{\mu\nu}$ 
(this is obvious in the flat case). The question then arises whether 
one should also integrate over this moduli space\footnote{See also \cite{Doplicher:1994tu} 
for a related discussion in the context of {\nc} field theory.}.
In particular, this would amount to an integration over all 
configurations corresponding to different asymptotics of $\theta^{\mu\nu}$ related by 
Lorentz rotations. The Lorentz-violating term
$W\theta\theta$ \eq{Wthth} would then disappear from the action.
This issue boils down to the question whether or not there
really is a non-trivial VEV $\langle\theta^{\mu\nu}\rangle$,
spontaneously breaking  Lorentz invariance.
Note that this is not essential for the mechanism of gravity 
presented here, which works also (and in fact simplifies) under weaker assumptions
such as $\langle\theta^{\mu\nu}\rangle = 0$ but
$\langle\theta^{\mu\nu}\theta^{\mu'\nu'}\rangle \neq 0$.

Finally, we should perhaps comment on the cosmological constant
problem, which in the present setting amounts to 
explaining why $V' = 0$ implies $V \approx 0$,
i.e. that $V\approx0$ at its minimum (cp. \eqref{V-explicit}). 
At this stage (in the ``Einstein branch'' \cite{Steinacker:2010rh}) 
this problem may appear to be similar as in standard GR, but again there 
are additional ingredients such as extrinsic curvature, compact extra dimensions,
an additional (harmonic) branch of solutions, etc. 
which may shed new light on this problem.

\section{Concluding remarks}

The results of this paper represent a further step in the long-term project of 
studying the effective gravity theory emergent from matrix models of Yang-Mills type. 
One important new insight is that the ``bare'' Yang-Mills term defines a positive-definite action 
for $h_{\mu\nu} = G_{\mu\nu}-g_{\mu\nu}$, which implies that the effective metric
approximately coincides with the induced (embedding) metric. Furthermore, we studied the 
geometrical meaning of higher-order terms in the matrix model for general backgrounds, 
identifying in particular an action which is very similar to the Einstein-Hilbert action,
taking into account $G_{\mu\nu}\approx g_{\mu\nu}$ and $e^\s\approx$ const. 
Such terms are expected at the level of the quantum effective action, or alternatively
they can be added to the action by hand.
These results are very welcome in the quest for a realistic theory of (quantum) gravity.

We also identified some specific issues and potential problems in clarifying the 
physical viability and the relation with general relativity. 
One issue is a certain ``tension'' between self-dual $\theta^{\mu\nu}$ and $e^\s \approx$ const, 
which both seem natural and desirable in view of the above results. 
Once this is understood, one can
proceed to reliably analyze the equations for the embedding resp. for 
the effective metric, which 
then describes gravity and its deviation from GR.

The bottom line is that the model defines a highly non-trivial coupled system for 
the embedding  $g_{\mu\nu}$ and the Poisson structure $\theta^{\mu\nu}$, 
and contains some (quantum) theory of gravity. 
This complexity is of course essential for any serious candidate for a 
realistic theory, but makes the identification of the ``relevant'' configurations and
solutions  non-trivial. An additional complication is that quantum effects
must be taken into account, e.g. through higher-order terms as discussed here.
Furthermore, the case of compact extra dimensions and the implications of 
non-trivial extrinsic terms such as $\Lap x^a \Lap x^a$ must be 
studied systematically.
Clearly much more work is needed before the physical viability of 
these models can be reliably addressed.
 On the other hand, the models are sufficiently
clear-cut such that their physical content can finally be understood.

\subsection*{Acknowledgements}

This work was supported by the ``Fonds zur F\"orderung der Wissenschaftlichen Forschung'' (FWF) under contract P21610-N16.
H.S. is grateful for hospitality at the AEI Golm, and useful discussions with J. Arnlind, J. Hoppe, D. Oriti, 
M. Sivakumar, A. Schenkel, P. Schupp, R. Szabo and S. Theisen are acknowledged.

\startappendix

\Appendix{Derivation of \eq{CC-GGG} and \eq{nablaC}}
\label{app:CC-GGG}
Consider
{\allowdisplaybreaks
\bea
&& G^{\a\b} C_{\a;\sigma\nu} C_{\b;\mu\rho}G^{\rho\nu} G^{\sigma\mu} \nn\\*
&=& \frac 14 G^{\a\b} \(\nabla_\nu g_{\s\a} + \nabla_\s g_{\nu\a} -  \nabla_\a g_{\s\nu}\)
\(\nabla_\mu g_{\rho\b} + \nabla_\rho g_{\mu\b} -  \nabla_\b g_{\rho\mu}\)G^{\rho\nu} G^{\sigma\mu} \nn\\
&=& \frac 12 G^{\a\b} \(\nabla_\nu g_{\s\a}G^{\sigma\mu}  
- \frac 12  \nabla_\a g_{\s\nu}G^{\sigma\mu} \)
\(\nabla_\mu g_{\rho\b} + \nabla_\rho g_{\mu\b} -  \nabla_\b g_{\rho\mu}\)G^{\rho\nu} \nn\\
&=& \frac 12 \( \nabla_\nu (2e^{-\s}\eta G^{\b\mu} - g^{\b\mu})
- \frac 12G^{\a\b}\nabla_\a g_{\s\nu}G^{\sigma\mu} \)
G^{\rho\nu}\(\nabla_\mu g_{\rho\b} + \nabla_\rho g_{\mu\b} -  \nabla_\b g_{\rho\mu}\) \nn\\
&=&  \Big(G^{\b\mu}\del_\nu(e^{-\s}\eta)  - \frac 12\nabla_\nu g^{\b\mu}\Big)\nabla^\nu g_{\mu\b} 
 - G^{\a\b}\nabla_\a \Big(e^{-\s}\eta G^{\mu\rho} - \inv2 g^{\mu\rho}\Big)
\Big(\nabla_\rho g_{\mu\b} - \inv2 \nabla_\b g_{\rho\mu}\Big) \nn\\
&=& \frac 32  \del_\nu(e^{-\s}\eta)\del^\nu (Gg)   - \frac 34\nabla_\nu g^{\b\mu}\nabla^\nu g_{\mu\b} 
- \del_\a (e^{-\s}\eta) \nabla_{\mu} (2e^{-\s}\eta G^{\mu\a} - g^{\mu\a}) \nn\\*
&& +\frac 12  \nabla_\a g^{\mu\rho}\nabla_\rho (2e^{-\s}\eta \d_\mu^\a - G_{\mu\b}g^{\a\b})  \nn\\
&=& 4 \del_\nu(e^{-\s}\eta)\del^\nu (e^{-\s}\eta)  
 - \frac 34\nabla_\nu g^{\b\mu}\nabla^\nu g_{\mu\b} 
 + 2\del_\a (e^{-\s}\eta)  \nabla_{\mu}g^{\mu\a}
- \inv2 G_{\mu\b}\nabla_\a g^{\mu\rho}\nabla_\rho g^{\a\b} 
\label{CC-GGG-deriv}
\eea
assuming $2n=4$, where we have used \eqref{4D-id-2}. 

The relation \eq{nablaC} can be seen as follows:
\bea
&& g^{\s\mu}\nabla_\sigma C_{\mu;\rho\nu} - g^{\s\mu} \nabla_\rho C_{\mu;\s\nu} \nn\\*
&=& \frac 12 g^{\s\mu}\nabla_\sigma (\nabla_\rho g_{\mu\nu}+\nabla_\nu g_{\rho\mu}-\nabla_\mu g_{\rho\nu})
- \frac 12 g^{\s\mu} \nabla_\rho \nabla_\nu g_{\s\mu}\nn\\
&=& \frac 12 g^{\s\mu} \Big((\nabla_\rho\nabla_\sigma g_{\mu\nu}+\nabla_\nu \nabla_\sigma g_{\rho\mu}
 -\nabla_\sigma\nabla_\mu g_{\rho\nu})
-   \nabla_\rho \nabla_\nu g_{\s\mu}\nn\\*
&& +  ({R_{\s\rho\mu}}^\a g_{\a\nu} + {R_{\s\rho\nu}}^\a g_{\mu\a})
 + ( {R_{\s\nu\rho}}^\a g_{\a\mu} + {R_{\s\nu\mu}}^\a g_{\rho\a} )\Big)\nn\\
&=& \frac 12 g^{\s\mu} \Big(\nabla_\rho\nabla_\sigma g_{\mu\nu}+\nabla_\nu \nabla_\sigma g_{\rho\mu}
 -\nabla_\sigma\nabla_\mu g_{\rho\nu}
-   \nabla_\rho \nabla_\nu g_{\s\mu} \Big)\nn\\*
&& +  \frac 12 \Big(g^{\s\mu}{R_{\s\rho\mu\b}}[G] (Gg)^\b_{\nu} 
+ g^{\s\mu}{R_{\s\nu\mu\b}}[G] (Gg)^\b_{\rho}
- 2 R_{\a\rho\b\nu}[G] G^{\a\b}\Big)\nn\\
&=& \frac 12 \Big(-\nabla_\rho\nabla^\mu h_{\mu\nu}-\nabla_\nu \nabla^\mu h_{\rho\mu}
 +\Box_g h_{\rho\nu} 
 +  g^{\s\mu}  \nabla_\rho \nabla_\nu h_{\s\mu} 
\Big)\nn\\*
&& +  \frac 12 \Big(-R_{\rho\b}[g] h^{\b\a}g_{\a\nu} 
- R_{\nu\b}[g] h^{\b\a}g_{\a\rho}+ 2 R_{\a\rho\b\nu}[g] h^{\a\b}\Big) + \cO(h^2)
\,. \label{nablaC-general}
\eea
Now \eq{nablaC} follows noting that $g^{\rho\nu} \nabla\nabla h_{\rho\nu} = 0+\cO(h^2)$ 
due to \eq{eq:vars-expand-in-F}.
}

\Appendix{Semi-classical results for matrix model extensions}
\SubAppendix{Derivation of \eq{HXX-operator}}{app:HXX-op}
%
To see \eq{HXX-operator}, consider
\bea
H^{ab} [X_a,[X_b,\Phi]] 
&\sim& e^\sigma G^{\mu\nu}\del_\mu x^a \del_\nu x^b 
\theta^{\a\b}\del_\a x_a \del_\b(\theta^{\rho\eta}\del_\rho x_b\del_\eta\phi) \nn\\
&=& e^\sigma {(G g)^\nu}_\a \theta^{\a\b}
 \(\theta^{\rho\eta} g_{\nu\rho}\del_\b\del_\eta\phi
 +  \del_\b(\theta^{\rho\eta}g_{\r\n})\del_\eta\phi
 - \del_\b\del_\nu x^b \theta^{\rho\eta}\del_\rho x_b\del_\eta\phi\) \nn\\
&=& e^{2\sigma} (G g G)^{\b\eta}\del_\b\del_\eta\phi
 + e^\sigma \hat\theta^{\nu\b} \del_\b(e^{\sigma}G^{\eta\rho} \theta^{-1}_{\rho\nu}) \del_\eta\phi\nn\\
&=& e^{2\sigma} (G g G)^{\b\eta}\del_\b\del_\eta\phi
 + e^\sigma  \del_\b e^{\sigma} (G g G)^{\eta\b}  \del_\eta\phi\nn\\
&&+ e^{2\sigma}{(Gg)^{\b}}_\rho \del_\b G^{\eta\rho}  \del_\eta\phi
-\frac 12 e^{2\sigma} \hat\theta^{\nu\b} \del_\rho\theta^{-1}_{\nu\b} G^{\eta\rho}\del_\eta\phi 
\eea
using the fact that $\hat \theta^{\mu\nu}$ is anti-symmetric, and 
\bea
\hat\theta^{\nu\b} \del_\b\theta^{-1}_{\rho\nu} 
&=& -\hat\theta^{\nu\b} \del_\rho\theta^{-1}_{\nu\b} - \hat\theta^{\nu\b} \del_\nu\theta^{-1}_{\b\rho} 
\nn\\
2\hat\theta^{\nu\b} \del_\b\theta^{-1}_{\rho\nu} 
&=& -\hat\theta^{\nu\b} \del_\rho\theta^{-1}_{\nu\b}
\,.
\eea
On the other hand, consider 
\bea
(GgG)^{\mu\nu}\Gamma^\a_{\mu\nu}[G] 
&=& \frac 12 (GgG)^{\mu\nu}\(\del_\mu G_{\nu\b} + \del_\nu G_{\mu\b} - \del_\b G_{\mu\nu}\)G^{\a\b} \nn\\
&=&  -{(Gg)^{\mu}}_\b\del_\mu G^{\a\b}  - \frac 12(GgG)^{\mu\nu}\del_\b G_{\mu\nu} G^{\a\b}\nn\\
&=&  -{(Gg)^{\mu}}_\b\del_\mu G^{\a\b}  
+ \frac 12 (G^{\mu\nu}\del_\b g_{\mu\nu} 
+ 2 \hat\theta^{\a\b}\del_\mu \theta^{-1}_{\a\b} - (Gg)\del_\b \sigma)G^{\a\b}  \nn\\
&=&  -{(Gg)^{\mu}}_\b\del_\mu G^{\a\b}  
+ \frac 12 (\frac 12 \del_\b (Gg)
+ \hat\theta^{\a\b}\del_\mu \theta^{-1}_{\a\b} - \frac 12 (Gg) \del_\b \sigma)G^{\a\b}  \nn
\eea
using
\bea
\frac 12 \del_\mu (Gg) = G^{\a\b}\del_\mu g_{\a\b} 
+ \hat\theta^{\a\b}\del_\mu \theta^{-1}_{\a\b} 
- \frac 12 (Gg) \del_\mu \sigma
\,.
\eea
Therefore we get
\bea
H^{ab} [X_a,[X_b,\phi]] 
&\sim& e^{2\sigma} (G g G)^{\b\eta}\del_\b\del_\eta\phi
 + e^\sigma  \del_\b e^{\sigma} (G g G)^{\eta\b}  \del_\eta\phi\nn\\
&& - e^{2\sigma}((GgG)^{\mu\nu}\Gamma^\eta_{\mu\nu} 
- \frac 14 \del_\rho (Gg)G^{\eta\rho}
+\frac 14 (Gg) \del_\rho \sigma G^{\eta\rho} ) \del_\eta\phi \nn\\
&=&  e^{2\sigma} (G g G)^{\b\eta}\nabla_\b\del_\eta\phi
 + e^\sigma  \del_\b e^{\sigma} (G g G)^{\eta\b}  \del_\eta\phi
\nn\\ &&
 + \inv4 e^{2\s}  (\del_\rho (Gg) - (Gg) \del_\rho\sigma) G^{\eta\rho} \del_\eta\phi  \,,
\eea
which is indeed tensorial.

\SubAppendix{Derivation of \eq{eq:S6-semiclass}}{app:S6}
%
We use the (constant) background metric $\eta_{ab}$ to pull down Latin indices, i.e. $x_a\equiv x^b\eta_{ab}$, and consider first
\begin{align}
\inv{2}\co{X^c}{\co{X^a}{X^b}}\co{X_c}{\co{X_a}{X_b}}&\sim \inv{2}e^\s G^{\n\s}\nabg_\n\left(\th^{\a\b}\pa_\a x^a\pa_\b x^b\right)\nabg_\s\left(\th^{\t\e}\pa_\t x_a\pa_\e x_b\right)\nn\\
&=e^\s G^{\n\s}\left(e^\s G^{\a\t}\nabg_\n\nabg_\a x^a\nabg_\s\nabg_\t x_a+\inv{2}g_{\a\t}g_{\b\e}\nabg_\n\th^{\a\b}\nabg_\s\th^{\t\e}\right)
\,.\label{eq:S_6-zw1}
\end{align}
From the Jacobi identity
\begin{align}
\th^{\m\a}\nabg_\a\th^{\n\s}+\th^{\n\a}\nabg_\a\th^{\s\m}+\th^{\s\a}\nabg_\a\th^{\m\n}=0\,,
\end{align}
it follows that
\begin{align}
\nabg_\r\th^{\m\n}&=\left(\th^{\m\a}\th^{\n\s}-\th^{\m\s}\th^{\n\a}\right)\nabg_\a\th^{-1}_{\r\s}\,,
\end{align}
which enables us to simplify the second term of \eqref{eq:S_6-zw1} further:
\begin{align}
\inv{2}g_{\a\t}g_{\b\e}\nabg_\n\th^{\a\b}\nabg_\s\th^{\t\e}&=g_{\a\t}g_{\b\e}\nabg_\n\th^{\a\b}\th^{\t\m}\th^{\e\r}\nabg_\m\th^{-1}_{\s\r}\nn\\
&=\th^{\e\r}\nabg_\n\left(e^\s(Gg)^{\m}_{\e}\right)\nabg_\m\th^{-1}_{\s\r}+e^\s(Gg)^\r_\t\nabg_\n\th^{\t\m}\nabg_\m\th^{-1}_{\s\r}\nn\\
&=e^\s\left(\pa_\n\s\inv{2}\hat\th^{\m\r}\nabg_\s\th^{-1}_{\m\r}+\left(\nabg_\n(Gg)^{\m}_{\t}\th^{\t\r}+(Gg)^\r_\t\nabg_\n\th^{\t\m}\right)\nabg_\m\th^{-1}_{\s\r}\right)
\,,
\end{align}
where $\hat\th^{\m\n}:=(Gg)^{\m}_\e\th^{\e\n}$. Hence,
{\allowdisplaybreaks
\begin{align}
&\frac{(2\pi)^2}{2}\Tr\big(\co{X^c}{\co{X^a}{X^b}}\co{X_c}{\co{X_a}{X_b}}\big)\nn\\*
&\sim\intg e^\s G^{\n\s}\Big(G^{\a\t}\nabg_\n\nabg_\a x^a\nabg_\s\nabg_\t x_a+\pa_\n\s\inv{2}\hat\th^{\m\r}\nabg_\s\th^{-1}_{\m\r}\nn\\*
&\qquad\qquad+\left(\nabg_\n(Gg)^{\m}_{\t}\th^{\t\r}+(Gg)^\r_\t\nabg_\n\th^{\t\m}\right)\nabg_\m\th^{-1}_{\s\r}\Big)\nn\\
&=\intg e^\s \Big(G^{\n\s}G^{\a\t}\nabg_\n\nabg_\a x^a\nabg_\s\nabg_\t x_a+G^{\n\s}\pa_\n\s\inv{2}\hat\th^{\m\r}\nabg_\s\th^{-1}_{\m\r}\nn\\*
&\qquad\qquad+G^{\n\s}\nabg_\n(Gg)^{\m}_{\t}\th^{\t\r}\nabg_\m\th^{-1}_{\s\r}-\left(\pa_\m\s G^{\n\s}+\nabg_\m G^{\n\s}\right)(Gg)^\r_\t\th^{-1}_{\s\r}\nabg_\n\th^{\t\m}\nn\\*
&\qquad\qquad -G^{\n\s}\th^{-1}_{\s\r}\nabg_\m(Gg)^\r_\t\nabg_\n\th^{\t\m}-G^{\n\s}(Gg)^\r_\t\th^{-1}_{\s\r}\nabg_\m\nabg_\n\th^{\t\m}\Big)\nn\\
&=\intg e^\s \Big(G^{\n\s}G^{\a\t}\nabg_\n\nabg_\a x^a\nabg_\s\nabg_\t x_a
+G^{\n\s}\pa_\n\s e^{-\s}\nabg_\s\eta\nn\\*
&\qquad\qquad-G^{\n\s}\th^{-1}_{\s\r}\nabg_\n(Gg)^{\m}_{\t}\nabg_\m\th^{\t\r}+\left(\pa_\m\s G^{\n\s}+\nabg_\m G^{\n\s}\right)(Gg)^\r_\s\th^{-1}_{\t\r}\nabg_\n\th^{\t\m}\nn\\*
&\qquad\qquad -G^{\n\s}\th^{-1}_{\s\r}\nabg_\m(Gg)^\r_\t\nabg_\n\th^{\t\m}-(GgG)^{\n\r}\th^{-1}_{\r\t}\left(\co{\nabg_\m}{\nabg_\n}\th^{\t\m}+\nabg_\n(\pa_\m\s\th^{\t\m})\right)\Big)\nn\\
&=\intg e^\s \Big(G^{\n\s}G^{\a\t}\nabg_\n\nabg_\a x^a\nabg_\s\nabg_\t x_a
+\inv{4}G^{\n\m}\pa_\n\s \left(\nabg_\m(Gg)+(Gg)\pa_\m\s\right)\nn\\*
&\qquad\qquad-G^{\n\s}\th^{-1}_{\s\r}\left(\nabg_\n(Gg)^{\m}_{\t}\nabg_\m\th^{\t\r}+\nabg_\m(Gg)^\r_\t\nabg_\n\th^{\t\m}\right)\nn\\*
&\qquad\qquad+\pa_\m\s (GgG)^{\n\r}\th^{-1}_{\t\r}\nabg_\n\th^{\t\m}+\nabg_\m (Gg)^\n_\a G^{\r\a}\th^{-1}_{\t\r}\nabg_\n\th^{\t\m}\nn\\*
&\qquad\qquad -(GgG)^{\n\r}\th^{-1}_{\r\t}\left({R[g]_{\n\m\eta}}^\t\th^{\eta\m}+R[g]_{\n\eta}\th^{\t\eta}\right)\nn\\*
&\qquad\qquad -(GgG)^{\n\r}\left(\nabg_\n\pa_\r\s+\pa_\m\s\th^{-1}_{\r\t}\nabg_\n\th^{\t\m}\right)\Big)\nn\\
&=\intg e^\s \Big(G^{\n\s}G^{\a\t}\nabg_\n\nabg_\a x^a\nabg_\s\nabg_\t x_a
+\inv{4}G^{\n\m}\pa_\n\s \left(\nabg_\m(Gg)+(Gg)\pa_\m\s\right)\nn\\*
&\qquad\qquad -\nabg_\m(Gg)^\n_\s\nabg_\n\th^{-1}_{\t\r}\left(G^{\m\r}\th^{\s\t}+G^{\r\s}\th^{\t\m}\right)
-G^{\n\s}\th^{-1}_{\s\r}\nabg_\m(Gg)^\r_\t\nabg_\n\th^{\t\m}\nn\\*
&\qquad\qquad -2\pa_\m\s (GgG)^{\n\r}\nabg_\n\th^{-1}_{\t\r}\th^{\t\m}
 +\nabg_\n(GgG)^{\n\r}\pa_\r\s + (GgG)^{\n\r}\pa_\n\s\pa_\r\s\nn\\*
&\qquad\qquad +e^{-\s}\hat\th^{\n\b}R[g]_{\n\m\eta\b}\th^{\eta\m}-(GgG)^{\n\r}R[g]_{\n\r}\Big)\nn\\
&=\intg e^\s \Big(G^{\n\s}G^{\a\t}\nabg_\n\nabg_\a x^a\nabg_\s\nabg_\t x_a
+\inv{4}G^{\n\m}\pa_\n\s \left(\del_\m(Gg)+(Gg)\pa_\m\s\right)\nn\\*
&\qquad\qquad -\nabg_\m(Gg)^\n_\s G^{\r\s}\nabg_\n\th^{-1}_{\t\r}\th^{\t\m}
-\nabg_\m G^{\r\r'} \theta^{-1}_{\r'\t}\nabg_\n G^{\t\m} \theta^{\n\s}g_{\s\r} \nn\\*
&\qquad\qquad + \inv2 e^\s G^{\t\m}\nabg_\m G^{\r\r'}\theta^{-1}_{\n\r'} \th^{-1}_{\s\r} \nabg_\t G^{\n\s}
 -2 (GgG + \frac 12 g)^{\n\r}\nabg_\n\th^{-1}_{\t\r}\th^{\t\m}\pa_\m\s
 +\nabg_\n(GgG)^{\n\r}\pa_\r\s\nn\\*
&\qquad\qquad + (GgG)^{\n\r}\pa_\n\s\pa_\r\s +e^{-\s}\hat\th^{\n\b}R[g]_{\n\m\eta\b}\th^{\eta\m}-(GgG)^{\n\r}R[g]_{\n\r}\Big)
\end{align}
using \eqref{4D-id-2}, \eqref{nabla-theta-id}, \eqref{div-theta-g-id}, 
and the identities $\nabg_\s\eta=\inv2(g\th g)_{\m\n}\nabg_\s\th^{\m\n}$ and $(Gg)^\m_\a\th^{-1}_{\b\m}=-(Gg)^\m_\b\th^{-1}_{\a\m}$
as well as
\begin{align}
G^{\n\s}\th^{-1}_{\s\r}\nabg_\m(Gg)^\r_\t&\nabg_\n\th^{\t\m}
= -\nabg_\m G^{\r\r'} \nabg_\n(e^\s\theta^{-1}_{\r'\t}G^{\t\m}) G^{\n\s}\th^{-1}_{\s\r} \nn\\
=& e^\s\nabg_\m G^{\r\r'} \nabg_{\r'}\theta^{-1}_{\t\n}G^{\t\m} G^{\n\s}\th^{-1}_{\s\r} 
+ e^\s G^{\t\m}\nabg_\m G^{\r\r'} \nabg_\t\theta^{-1}_{\n\r'} G^{\n\s}\th^{-1}_{\s\r} \nn\\*
  &  -\nabg_\m G^{\r\r'} \theta^{-1}_{\r'\t}\nabg_\n (e^\s G^{\t\m}) G^{\n\s}\th^{-1}_{\s\r} \nn\\
 =& -\nabg_\m (G^{\r\r'}g_{\s\r}) \nabg_{\r'}\theta^{-1}_{\t\n}G^{\t\m} \theta^{\n\s}
 + e^\s\frac 12 G^{\t\m}\nabg_\m G^{\r\r'} \nabg_\t (\theta^{-1}_{\n\r'} G^{\n\s}\th^{-1}_{\s\r}) \nn\\*
&  -\nabg_\m G^{\r\r'} \theta^{-1}_{\r'\t}\nabg_\n (e^\s G^{\t\m}) G^{\n\s}\th^{-1}_{\s\r} 
 - e^\s\frac 12 G^{\t\m}\nabg_\m G^{\r\r'}\theta^{-1}_{\n\r'} \th^{-1}_{\s\r} \nabg_\t G^{\n\s} \nn\\
=& \nabg_\m (Gg)^{\n}_\s \nabg_{\n}\theta^{-1}_{\t\r}G^{\r\m} \theta^{\t\s} 
- \frac 12 G^{\t\m}\del_\m (Gg) \del_\t \s
 -e^\s \nabg_\m G^{\r\r'} G^{\t\m}\theta^{-1}_{\r'\t}\th^{-1}_{\s\r}  G^{\n\s}\del_\n \s \nn\\*
& -e^\s \nabg_\m G^{\r\r'} \theta^{-1}_{\r'\t}\nabg_\n G^{\t\m} G^{\n\s}\th^{-1}_{\s\r}
 - e^\s\frac 12 G^{\t\m}\nabg_\m G^{\r\r'}\theta^{-1}_{\n\r'} \th^{-1}_{\s\r} \nabg_\t G^{\n\s}\nn\\
=& \nabg_\m (Gg)^{\n}_\s \nabg_{\n}\theta^{-1}_{\t\r}G^{\r\m} \theta^{\t\s} 
- g^{\m\s'}\nabg_\m \theta^{-1}_{\s\s'} \theta^{\n\s} \del_\n \s \nn\\*
& -e^\s \nabg_\m G^{\r\r'} \theta^{-1}_{\r'\t}\nabg_\n G^{\t\m} G^{\n\s}\th^{-1}_{\s\r}
  - e^\s\frac 12 G^{\t\m}\nabg_\m G^{\r\r'}\theta^{-1}_{\n\r'} \th^{-1}_{\s\r} \nabg_\t G^{\n\s} 
\end{align}
where the last step follows from
\bea
e^\s \nabg_\m G^{\r\r'} G^{\t\m}\theta^{-1}_{\r'\t}\th^{-1}_{\s\r}  G^{\n\s}\del_\n \s 
&=& e^{-\s} \nabg_\m G^{\r\r'} (g_{\r'\t}\theta^{\t\m}) (g_{\s\r} \theta^{\n\s}) \del_\n \s \nn\\
&=& e^{-\s} \nabg_\m \Big(\inv2 (Gg) g_{\t\s} - G_{\t\s}\Big) \th^{\t\m}\th^{\n\s} \del_\n \s \nn\\
&=& -\frac 12  G^{\m\n} \del_\m (Gg)\del_\n \s + g^{\m\s'}\nabg_\m \theta^{-1}_{\s\s'} \theta^{\n\s} \del_\n \s \nn
\eea
using the 4D identity \eqref{4D-id-2}, since
\bea
e^{-\s} \nabg_\m  G_{\t\s}\theta^{\t\m}\theta^{\n\s} \del_\n \s
&=& e^{-\s} \nabg_\m (e^\s \theta^{-1}_{\t\t'} \theta^{-1}_{\s\s'} g^{\t'\s'})\theta^{\t\m}\theta^{\n\s} \del_\n \s \nn\\
&=& \nabg_\m \theta^{-1}_{\t\t'} g^{\t'\n}\theta^{\t\m} \del_\n \s 
- g^{\m\s'}\nabg_\m \theta^{-1}_{\s\s'} \theta^{\n\s} \del_\n \s 
- g^{\m\n} \del_\m \s \del_\n \s \nn\\
&=&  -\theta^{-1}_{\t\t'} g^{\t'\n}\nabg_\m\theta^{\t\m} \del_\n \s 
- g^{\m\s'}\nabg_\m \theta^{-1}_{\s\s'} \theta^{\n\s} \del_\n \s 
- g^{\m\n} \del_\m \s \del_\n \s \nn\\
&=& - g^{\m\s'}\nabg_\m \theta^{-1}_{\s\s'} \theta^{\n\s} \del_\n \s 
\eea
due to \eqref{nabla-theta-id}. 
Together with the definition of the curvature tensor with respect to the induced metric
\eqref{eq:def-Rg} we obtain \eq{eq:S6-semiclass}.
}

\SubAppendix{Derivation of $\cO(X^{10})$ terms}{app:S10}

Consider first
{\allowdisplaybreaks
\begin{align}
H^{ab} \Box H_{ab}
&\sim - e^{2\sigma} G^{\mu\nu}\del_\mu x^a \del_\nu x^b
\Box_G (e^\sigma G^{\mu'\nu'}\del_{\mu'} x_a \del_{\nu'} x_b) \nn\\*
&= - e^{2\sigma} G^{\mu\nu}\del_\mu x^a \del_\nu x^b
\Big(\Box_G e^\sigma G^{\mu'\nu'}\del_{\mu'} x_a \del_{\nu'} x_b
 + 2 e^\sigma G^{\mu'\nu'}\Box_G\del_{\mu'} x_a \del_{\nu'} x_b \nn\\*
 &\quad +4\del^\a e^\sigma G^{\mu'\nu'}\nabG_\a\del_{\mu'} x_a \del_{\nu'} x_b
+ 2e^\sigma G^{\mu'\nu'}\nabla_\a\del_{\mu'} x_a \nabla^\a\del_{\nu'} x_b
\Big)  \nn\\
&= - e^{3\sigma} 
\Big((GgGg) e^{-\s}\Box_G e^\sigma
 +2 G^{\mu\nu} G^{\mu'\nu'} G^{\a\b} C_{\mu;\a\mu'} C_{\nu;\b\nu'} \nn\\*
 &\quad + 2  (GgG)^{\m\m'}(\del_\mu x^a\nabla_{\mu'}\Box_G x_a 
+ R_{\mu\eta} (Gg)^{\eta}_{\mu'} + 2C_{\mu;\a\mu'}\del^\a \sigma )\Big)  \nn\\
&= - e^{3\sigma} 
\Big((GgGg) e^{-\s}\Lap e^\sigma  +2 G^{\mu\nu} G^{\mu'\nu'} G^{\a\b} C_{\mu;\a\mu'} C_{\nu;\b\nu'} \nn\\*
&\quad + 2  (GgG)^{\m\m'}(2C_{\mu;\a\mu'}\del^\a \s -G_{\mu\b} \nabG_{\mu'}\nabG_\a g^{\a\b}
-\nabla_{\mu'}\del_\mu x^a\Box_G x_a 
+ (Gg)^{\eta}_{\mu'} R_{\mu\eta}[G]) \Big) 
\end{align}
using \eq{dXboxX-4D}.
The second term is elaborated in \eq{CC-GGG}, and using the 4D identity \eq{4D-id-2},
 \eqref{cons-g-1} and \eq{g-det-nabla-id} we obtain
}
\begin{align}
& H^{ab} \Box H_{ab} 
\sim - e^{3\sigma} 
\Big((GgGg) e^{-\s}\Box_G e^\sigma  + \frac 12 \del_\nu(Gg)\del^\nu (Gg) + \del_\a (Gg)  \nabla_{\mu}g^{\mu\a} 
  - \frac 32\nabla_\nu g^{\b\mu}\nabla^\nu g_{\mu\b} \nn\\
&\quad + \big((Gg)G^{\mu\mu'} - 2g^{\mu\mu'}\big)
\big(2C_{\mu;\a\mu'}\del^\a \sigma 
-\nabla_{\mu'}\del_\mu x^a\Box_G x_a 
+ R_{\mu\eta} (Gg)^{\eta}_{\mu'} \big) \nn\\
 &\quad  - 2G^{\mu\mu'}g_{\mu\b}\nabla_{\mu'}\nabla_\a g^{\a\b}
 -G_{\mu\b}\nabla_\a g^{\mu\rho}\nabla_\rho g^{\a\b}\Big)  \nn\\
&= - e^{3\sigma} 
\Big((GgGg) e^{-\s}\Box_G e^\sigma 
 + \frac 12 \del_\nu(Gg)\del^\nu (Gg) + \del_\a (Gg)  \nabla_{\mu}g^{\mu\a} 
  - \frac 32\nabla_\nu g^{\b\mu}\nabla^\nu g_{\mu\b} \nn\\
&\quad - G_{\mu\b}\nabla_\a g^{\mu\rho}\nabla_\rho g^{\a\b} 
+ (Gg)\Big( \del_\a (Gg)\del^\a \sigma -\Box_G x^a\Box_G x_a 
+ R_{\mu\eta}[G] (GgG)^{\mu\eta} \Big) \nn\\
 &\quad - 2G^{\mu\mu'}g_{\mu\b} \nabla_{\mu'}\nabla_\a g^{\a\b}
+2g^{\mu\mu'}\nabla_{\mu'}\del_\mu x^a\Box_G x_a - 2R[G] \Big) 
\,.
\end{align}
Note that there are two terms 
$g^{\mu\mu'}\nabla_{\mu'}\del_\mu x^a\Box_G x_a$ and $\Box_G x^a\Box_G x_a$, which 
are not tensorial but depend on the embedding of $\cM^4 \subset \R^D$. 
They coincide in the self-dual case
where $g_{\m\n}=G_{\m\n}$, but in general they are independent. In order to 
obtain tensorial expressions, we must cancel these terms. 
This can be achieved using \eq{XH-HXP}:
\begin{align}
& H^{ab} \Box H_{ab} + 2\Box X^c H^{ab}[X_a, [X_b,X_c]]  \nn\\
\sim& - e^{3\sigma} 
\Big((GgGg) e^{-\s}\Box_G e^\sigma 
 + \frac 12 \del_\nu(Gg)\del^\nu (Gg) + \del_\a (Gg)  \nabla_{\mu}g^{\mu\a} 
  - \frac 32\nabla_\nu g^{\b\mu}\nabla^\nu g_{\mu\b} - 2R[G] \nn\\
&\quad - G_{\mu\b}\nabla_\a g^{\mu\rho}\nabla_\rho g^{\a\b} 
+ (Gg)\Big( \del_\a (Gg)\del^\a \sigma 
+ R_{\mu\eta}[G] (GgG)^{\mu\eta} \Big) 
 - 2G^{\mu\mu'}g_{\mu\b} \nabla_{\mu'}\nabla_\a g^{\a\b} \Big) \nn\\
& - 2 e^{3\s}\Box_G x^c\del_\mu x^c (e^{-\s}G^{\mu\nu} \del_\nu \eta - g^{\mu\nu}\nabla_\nu \s)   \nn\\
=& - e^{3\sigma} 
\Big((GgGg) e^{-\s}\Box_G e^\sigma 
 + \frac 12 \del_\nu(Gg)\del^\nu (Gg) 
+ \del_\a (Gg)  \nabla_{\mu}g^{\mu\a} 
  - \frac 32\nabla_\nu g^{\mu\b}\nabla^\nu g_{\mu\b} \nn\\
&\quad 
+ (Gg)R_{\mu\eta}[G] (GgG)^{\mu\eta}  - 2R[G] 
 - 2\nabla_{\mu'}(G^{\mu\mu'}g_{\mu\b} \nabla_\a g^{\a\b})
 + 2\nabla^{\mu}g_{\mu\b}\nabla_\a g^{\a\b} \nn\\
& - G_{\mu\b}\nabla_\a g^{\mu\rho}\nabla_\rho g^{\a\b} 
 +  (Gg)\del_\a (Gg)\del^\a \sigma 
+ 2\nabla_\b g^{\a\b}e^{-\s} \del_\a \eta 
- 2g^{\mu\nu}G_{\mu\a}\nabla_\b g^{\a\b}\del_\nu \s  \Big), 
\label{eq:appB-zw1}
\end{align}
where we also used \eq{dXboxX-4D}.
This is manifestly tensorial, and can be rewritten in various ways.
Under the integral, \eqref{eq:appB-zw1} can be simplified further using
\be
 \intG e^{2\s}\,
  \nabla_{\nu}(G^{\mu\nu}g_{\mu\b} \nabla_\a g^{\a\b}) 
= \intG e^{2\s}\,\(2 g^{\mu\nu}G_{\nu\eta}\nabG_\a g^{\a\eta} \del_\mu\sigma  
 - 4e^{-\s}\eta  \nabla_\a g^{\a\b} \del_{\nu}\s\) 
\,, \nn
\ee
so that
\begin{align}
&(2\pi)^2 \Tr (H^{ab} \Box H_{ab} + 2\Box X^c H^{ab}[X_a, [X_b,X_c]])  \nn\\
\sim& - \intG  e^{2\sigma} 
\Big((GgGg) e^{-\s}\Box_G e^\sigma 
 + \frac 12 \del_\nu(Gg)\del^\nu (Gg) 
+ \del_\a (Gg)  \nabla_{\mu}g^{\mu\a} 
  - \frac 32\nabla_\nu g^{\mu\b}\nabla^\nu g_{\mu\b} \nn\\
&\quad + (Gg)R_{\mu\eta}[G] (GgG)^{\mu\eta}  - 2R[G] 
 + 8 e^{-\s}\eta  \nabla_\a g^{\a\b} \del_{\nu}\s 
 + 2\nabla^{\mu}g_{\mu\b}\nabla_\a g^{\a\b} \nn\\*
&\quad - G_{\mu\b}\nabla_\a g^{\mu\rho}\nabla_\rho g^{\a\b} 
 +  (Gg)\del_\a (Gg)\del^\a \sigma 
+ 2\nabla_\b g^{\a\b}e^{-\s} \del_\a \eta 
- 6g^{\mu\nu}G_{\mu\a}\nabla_\b g^{\a\b}\del_\nu \s \Big) \nn\\
=&  - \intG  e^{2\sigma} 
\Big(4 e^{-\s}\Box_G e^\sigma 
 + 4 R_{\mu\eta}[G] (GgG)^{\mu\eta}  - 2R[G] 
 + 4 \nabla_\a g^{\a\b} \del_{\nu}\s 
\,\, + \cO(h^2) \Big) ,
\end{align}
noting that 
$(Gg) = 4 + \cO(h^2)$ due to \eq{gh-contract}, 
$(GgGg) = \frac 12 (Gg)(Gg) - 4$
and $\eta = e^\s + \cO(h^2)$.



\end{document}